\newcommand\ie{{\emph{i.e.}}}
\newcommand{\verifcar}{{\sc Verif\-Car}} 
\newcommand{\uppaal}{{\sc Uppaal}}
\newcommand{\zinc}{{\sc Zinc}}
\newcommand{\model}{{\sc MAPT}}
\newcommand{\gmodel}{{\sc G-MAPT}}
\newcommand{\cavs}{{\sc CAVs}}
\newcommand{\cav}{{\sc CAV}}
\newcommand{\acygra}{{\sc DAG}}
\newcommand{\ctl}{{\sc CTL}}
\newcommand{\lcm}{{\sf lcm}}
\newcommand{\lcmE}{{\sf lcm(E)}} 
\newcommand{\defeq}{\stackrel{\mbox{{\tiny\rm df}}}{=}}
\newcommand{\state}{{\mathit state}}
\newcommand{\ttime}{{\mathit time}}
\newcommand{\pre}[1]{{\mathit {}^\bullet{#1}}}
\newcommand{\post}[1]{{\mathit {#1}^\bullet}}
\newcommand{\nextt}{{\sf succ}}
\newcommand{\word}{{\sf word}}
\newcommand{\true}{{\sf true}}
\newcommand{\false}{{\sf false}}
\newcommand{\init}{{\mathsf{init}}}
\newcommand{\Init}{{\mathsf{Init}}}
\newcommand{\fleche}{{\mathit \;{-}{-}{>}\;}}
\newcommand{\drop}[1]{}
\newcommand{\leer}{\varepsilon} 
\newcommand{\comm}[1]{\textcolor{red}{{#1}}}
\newcommand{\impl}{\Longrightarrow}
\newcommand{\rimpl}{\Longleftrightarrow}
\newcommand{\val}{{\mathcal V}}
\newtheorem{constraint}{Constraint}
\begin{document}

\title{Dynamic exploration of 
multi-agent systems \\ with timed periodic tasks} 
\author{Johan Arcile\\
  IBISC, Univ Evry, Université Paris-Saclay, \\
  91025, Evry, France
\and Raymond Devillers\\
  ULB, Bruxelles, Belgium
\and Hanna Klaudel\\
  IBISC, Univ Evry, Université Paris-Saclay, \\
  91025, Evry, France }

\runninghead{J. Arcile, R. Devillers, H. Klaudel}{Dynamic exploration of MAS with timed periodic tasks} 
\maketitle

\begin{abstract} 
We formalise and study multi-agent timed models 
\model{}s ({\em Multi-Agent with timed Periodic Tasks}),
where each agent is associated to a regular timed schema 
upon which all possibles actions of the agent rely.
\model{}s allow for an accelerated semantics and a layered structure of the 
state space, so that it is possible to 
explore the latter dynamically and use heuristics
to greatly reduce the computation time needed to address reachability problems.

We apply \model{}s to explore state spaces of autonomous vehicles and compare
it with other approaches in terms of expressivity, abstraction level and computation time.
\end{abstract}

\keywords{Real time multi-agent systems, periodic behaviour, on-the-fly exploration}

\section{Introduction}

In the context of modelling and validating communicating autonomous vehicles (\cav{}s), 
the framework \verifcar{}~\cite{verifcar} allows to study the behaviour and properties 
of systems composed of concurrent agents interacting in real time (expressed through real variables called clocks) and through shared variables. 
Each agent performs time restricted actions that impact the valuation of shared variables. 
The system is highly non-deterministic due to overlaps of timed intervals 
in which the actions of various agents can occur.

\verifcar{} is suitable for the exhaustive analysis of critical situations in terms of safety, efficiency and robustness with a specific focus on the impact of latencies, communication delays and failures on the behaviour of \cav{}s.
It features a parametric model of \cav{}s allowing to automatically adjust the size of the state space to suit the required level of abstraction.
This model is based on timed automata with an interleaving semantics~\cite{AlurDill90}, 
and is implemented with \uppaal{}~\cite{uppaal4.1}, a state of the art tool for real time systems with an efficient state space reduction for model checking.
However, it is limited in terms of expressivity and deals only with discrete values, which is not always
convenient and may lead to imprecise computations.

%JA transféré depuis RTAS
Various approaches \cite{OKelly201616AA,kong15,platzer09,quottrup04}, 
relying on formal methods, address the modelling and analysis of multi-agent 
systems in a context similar to ours.
In particular, bounded model checking approaches \cite{Clarke2001,Biere2003,Sorea2003} 
have been used for studying temporal logic properties. 
Standard and highly optimised model checking tools, like \uppaal~\cite{uppaal4.1,larsen:97,AlurDill90}, simplify a lot the process of studying the behaviour of such systems, but have some drawbacks. 
For instance, in addition to 
clocks, they usually only allow integer variables while rational ones would sometimes be more natural, leading to artificial discretisations. 
Next, they only check Boolean expressions while it may be essential to analyse numerical ones.
Finally, the Boolean expressions are restricted to a subset of the ones allowed by classical logical languages, in particular by excluding nested queries.

It turns out that state spaces in the applications studied with \verifcar{} are generally very large but take the form of a semantic directed acyclic graph (\acygra{}). Each agent also has syntactically the form of a  \acygra{} between clock resets. 
Our goal is to exploit these peculiarities 
to build a dedicated checking environment for reachability problems. 
Concretely, we want to explore the graph dynamically (\ie, checking temporal logic properties directly as we explore states) to avoid constructing the full state space, and therefore not to loose time and memory space storing and comparing all previously reached states. 
The objective is to be able to tune the verification algorithm with heuristics that will choose which path to explore in priority, which might significantly speed up the computation time if the searched state exists.
That implies that our algorithms should explore the graph depth-first, since width-first algorithms cannot explore paths freely and are restricted to fully explore all the states at some depth before exploring the next one. 

For systems featuring a high level of concurrency between actions, such as the \cav{} systems, 
most of the non-determinism results from possibly having several actions of different agents available from a given state, that can occur in different orders and which often lead eventually to the same state. 
This corresponds in the state space to what is sometimes called diamonds.
Width first exploration allows to compare states at a given depth and therefore remove duplicates, which is an efficient way to detect such diamonds. 
On the other hand, depth-first exploring such a state space with diamonds, 
leads to examine possibly several times the same states or paths, 
which is not efficient. 
In this context, our aim is to detect and merge identical states coming from diamonds
while continuing to explore the state space mainly depth-firstly. 
This diamond detection will consist in a width-first exploration in a certain layer of the
state space, each layer corresponding to some states at a given depth having common characteristics.
It turns out that such layers may be observed in the state space of \cav{} systems.
This allows for a depth-first exploration from layer to layer, while greatly reducing the chances of exploring several times the same states.
The class of models on which this kind of algorithm 
can be applied will be referred to as {\em Multi-Agent with timed Periodic Tasks} (\model{}s).

To implement such algorithms we use \zinc{}~\cite{pommereau:hal-01941485}, a compiler for high level Petri nets that generates a library of functions allowing to easily explore the state space.
We use such functions to dynamically explore the state space with algorithms designed for our needs.
In particular, this allows to apply heuristics leading to faster computation times, and results in a better expressivity of temporal logic than \uppaal{}, in particular by including nested queries.
Another gain when comparing to \uppaal{} is that we are not limited to integer computations and can use real or rational 
variables, thus avoiding loss of information.
To use \zinc{}, we have to emulate the real time with discrete variables.
We do so in a way that preserves the behaviour of the system: when using the model with the same discrete variables as with \uppaal{}, we obtain identical results.

In this paper, we start by a formal definition of \gmodel{}s, a general class of \model-like models, 
study its properties and provide a translation for high level Petri nets.
Then, we introduce our \model{} models, by slightly constraining \gmodel{} ones, in order to avoid useless features and to allow a first kind of acceleration procedures. 
Then, we present the layered structure and the algorithms taking advantage of it.
Finally, we propose heuristics and use them in experiments that highlight the benefits of our approach in terms of expressivity, abstraction level and computation time.

\section{Syntax and semantics of \gmodel{}s}

A \gmodel{} is a model composed of several agents that may interact through a 
shared variable.
Each agent is associated with a clock and performs actions occurring in some given time intervals.
There is no competition between agents in the sense that no agent will ever have to wait for another one's action in order to perform its own actions.
However, there may be non determinism when several actions are available at the same time,
as well as choices between actions and time passing.

A \gmodel{} is a tuple $(\val,F,A,Init)$ where: 
\begin{itemize}
    \item $\val$ is a set of values;
    \item $F$ is a (finite) 
    set of variable transformations, \ie, calculable 
    functions from $\val$ to $\val$;
    \item $A$ is a set of $n$ agents such that $\forall i \in [1,n]$, agent $A_i \defeq (L_i,C_i,T_i,E_i)$ with:
    \begin{itemize}
        \item $L_i$ is a set of localities denoted as a list $L_i \defeq (l_i^1, \ldots, l_i^{m_i})$ with ${m_i} > 0$, such that $\forall i \neq j$, $L_i \cap L_j=\emptyset$;
        \item $C_i$ is the unique clock of agent $A_i$, with values in $\mathbb{N}$; 
        \item $T_i$ is a finite set of transitions, forming a directed 
        acyclic graph between localities, 
        with a unique initial locality $l_i^1$ and a unique final locality $l_i^{m_i}$, each transition being of the form $(l,f,I,l')$ where $l, l' \in L_i$ are the source and destination localities, 
        $f \in F$ is a function and $I \defeq [a,b]$ is an interval with $a,b \in \mathbb{N}$ and $a \leq b$.
        \item $E_i \in \mathbb{N}\setminus\{0\}$
        is the reset period of agent $A_i$.
    \end{itemize}
    \item $\Init$ is a triple $((l_1,\cdots,l_n),(\init_1,\cdots,\init_n),\init_V)$ where $\forall i \in [1,n]$, $l_i\in L_i$, 
    $\init_i \in \mathbb{N}$ 
    and $\init_V \in \val$.
\end{itemize}
For each agent $A_i$ and each locality $l\in L_i$,
we shall define by $\post{l}=\{(l,f,I,l')\in T_i\}$ the set of transitions originated from $l$, and by
$\pre{l}=\{(l',f,I,l)\in T_i\}$ the set of transitions leading to $l$. Note that, from the hypotheses, $\pre{l_i^1}=\emptyset$ and $\post{(l_i^{m_i})}=\emptyset$.
Moreover, when $i\neq j$, 
since $L_i\cap L_j=\emptyset$, $T_i\cap T_j=\emptyset$ too, so that each transition belongs to a single agent, avoiding confusions in the model.

A simple example of a \gmodel{} $M$ with two non-deterministic agents is represented in
Ex \ref{ex:mapt}.

In the semantics, for each agent $A_i$, 
we will emulate a transition from $l_i^{m_i}$ to $l_i^1$ that resets clock $C_i$ every $E_i$ time units.
As such, each agent in the network cycles over a fixed period. 
There can be several possible cycles though, since a given locality may be the source of several transitions, so that there may be several paths from $l_i^1$ to $l_i^{m_i}$. 

The behaviour of the system is defined as a transition system where $\Init$ is the initial state. 
A state of a \gmodel{} composed of $n$ agents as described above 
is a tuple denoted by $s = (\vec{l},\vec{c}, v)$ where $\vec{l} = (l_1,\cdots,l_n)$ with $l_i \in L_i$ is the current locality of agent $A_i$, $\vec{c} = (c_1,\cdots,c_n)$ where $c_i\in \mathbb{N}$ is the value of clock $C_i$, 
and $v \in \val$ is the value of variable $V$. 
There are three possible kinds of state changes: 
a firing of a transition, a clock reset and a time increase.
\begin{itemize}
    \item A transition $(l,f,[a,b],l') \in T_i$ can be fired if $l_i = l$ and $a \leq c_i \leq b$.
    Then, in the new state, 
    $l_i \leftarrow l'$ and $v \leftarrow f(v)$.
    \item A clock $C_i$ can be reset if $l_i = l_i^{m_i}$ and $c_i = E_i$. Then, 
    $c_i \leftarrow 0$ and $l_i \leftarrow  l_i^1$.
    \item Time can increase if 
    $\forall i\in[1,n]$, either there exist at least one transition $(l,f,[a,b],l') \in T_i$ with $l_i = l$ and $c_i < b$, or $l_i = l_i^{m_i}$ and $c_i < E_i$.
    A time increase means that $\forall i \in [1,n]$, $c_i \leftarrow c_i + 1$.
\end{itemize}

 It may be observed that there is a single global element in such a system: variable $V$; all the other ones are local to an agent.
 It is unique but its values may have the form of a vector, and an agent may modify several components of this vector through the functions of $F$ used in its transitions, 
 thus emulating the presence of several global variables.
 The values of $V$ are not restricted to the integer domain, but there is only a countable set of values that may be reached: 
 the ones that may be obtained from $\init_v$ by a recursive application of functions from $F$ 
 (the variable is not modified by the resets nor the time increases).
 However this domain may be dense inside the reals, for instance.
 
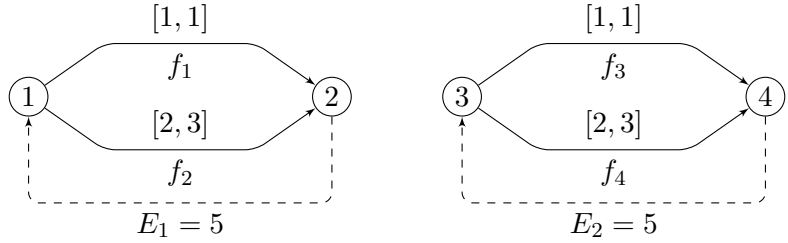
\begin{figure}[htb]
\tikzstyle{location}=[draw,circle,inner sep=2pt]
\centering
\begin{tikzpicture}[>=latex', xscale=1, yscale=.7,every node/.style={scale=1}]
\node[location] at (0,0) (a) {$1$};
\node[location] at (4,0) (b) {$2$};
\draw[->,rounded corners] (a) -- (1,1) -- node[above,midway] {$[1,1]$} node[below,midway] {$f_1$} (3,1) -- (b);
\draw[->,rounded corners] (a) -- (1,-1) -- node[above,midway] {$[2,3]$} node[below,midway] {$f_2$} (3,-1) -- (b);
\draw[->,dashed,rounded corners] (b) -- (4,-2) -- node[below,midway] {$E_1 = 5$} (0,-2) -- (a);
\end{tikzpicture}
\hspace{1cm}
\begin{tikzpicture}[>=latex',xscale=1, yscale=.7,every node/.style={scale=1}]
\node[location] at (0,0) (a) {$3$};
\node[location] at (4,0) (b) {$4$};
\draw[->,rounded corners] (a) -- (1,1) -- node[above,midway] {$[1,1]$} node[below,midway] {$f_3$} (3,1) -- (b);
\draw[->,rounded corners] (a) -- (1,-1) -- node[above,midway] {$[2,3]$} node[below,midway] {$f_4$} (3,-1) -- (b);
\draw[->,dashed,rounded corners] (b) -- (4,-2) -- node[below,midway] {$E_2 = 5$} (0,-2) -- (a);
\end{tikzpicture}
 \caption{\label{fig:exampleM} Visual representation of \gmodel{} from Ex. \ref{ex:mapt}. Dashed arcs represent resets.}
\end{figure}

\begin{example}
Let $M = (\val,F,A,\Init)$ where: 
\begin{itemize}
    \item $\val = \mathbb{R} \times \mathbb{N}$; 
    \item $F = \{f_1, f_2 ,f_3, f_4\}$ with\\
    $\begin{array}{ll}
         f_1(x,y) \rightarrow{} (2x,y+1) &
         f_2(x,y) \rightarrow{} (x+1.3,y+1) \\
         f_3(x,y) \rightarrow{} (\frac{x}{2},y) &
         f_4(x,y) \rightarrow{} (2x,y) \\
    \end{array}$
    \item $A = \{(L_1,C_1,T_1,E_1),(L_2,C_2,T_2,E_2)\}$ with \\
    $\begin{array}{lll}
         L_1 = \{1, 2\} & 
         T_1 = \{(1,f_1,[1,2],2),(l,f_2,[3,3],2)\} &
         E_1 = 5 \\
         L_2 = \{3,4\} &
         T_2 = \{(3,f_3,[1,2],4),(3,f_4,[3,3],4)\} &
         E_2 = 5 \\
    \end{array}$
    \item $\Init = ((1, 3),(0,0),(0.5,0))$.
\end{itemize}

A visual representation of $M$ is given in Fig.~\ref{fig:exampleM} while 
the initial fragment of its dynamics is depicted on top left of Fig.~\ref{fig:accelerated-space}.
Note that only transition firings and time increases are represented in Fig.~\ref{fig:accelerated-space} while the values $v$ of variable $V$ in the states $(\vec{l},\vec{c}, v)$ are always omitted. \hfill $\Diamond$

\label{ex:mapt}
\end{example}

In a dynamic system, {\em persistence} is a property that states that,
if two state changes are enabled at some state, then 
none of these changes disables the other one and performing them in any order leads to the same resulting state, forming a kind of {\em diamond}.
In \gmodel{} systems we have a kind of persistence restricted to different agents.
\begin{proposition}
In a \gmodel{}, if $i\neq j$ and $s=(\vec{l},\vec{c}, v)$ is any state, we have:
\begin{enumerate}
    \item if $s$ enables two transitions $t_1=(l_1,f_1,I_1,l'_1)\in T_i$ and $t_2=(l_2,f_2,I_2,l'_2)\in T_j$, 
    leading respectively to states $s_1$ and $s_2$,
    then $s_1$ enables $t_2$ leading to a state $s_3$ and $s_2$ enables $t_1$ leading to a state $s_4$;
    moreover $s_3=s_4$ iff $f_1\circ f_2(v)=f_2\circ f_1(v)$, i.e., if $f_1$ and $f_2$ commute on $v$;
    \item if $s$ enables a transition $t=(l,f,I,l')\in T_i$ and a reset of $A_j$,
    leading respectively to states $s_1$ and $s_2$,
    then $s_1$ enables the reset of $A_j$ leading to a state $s_3$ and $s_2$ enables $t$ leading to the same state $s_3$;
    \item if $s$ enables a reset of $A_i$ and a reset of $A_j$, 
    leading respectively to states $s_1$ and $s_2$,
    then $s_1$ enables the reset of $A_j$ leading to a state $s_3$ and $s_2$ enables the reset of $A_i$ leading to the same state $s_3$.
\end{enumerate}
\end{proposition}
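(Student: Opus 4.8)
The plan is to exploit the structural fact that the only component of a state shared between agents is the variable $V$, and that $V$ never occurs in any enabling condition. Concretely, I would first record two elementary observations about the effect of each kind of state change on a state $s=(\vec{l},\vec{c},v)$. Firing a transition of $T_i$ modifies only $l_i$ (setting it to the destination locality) and $v$ (applying the associated function), leaving every clock and every other locality untouched; a reset of $A_j$ modifies only $l_j$ (to $l_j^1$) and $c_j$ (to $0$), leaving $v$ and all the other components untouched. Dually, the enabling condition of a transition $(l,f,I,l')\in T_i$ depends only on $l_i$ and $c_i$, the enabling condition of a reset of $A_j$ depends only on $l_j$ and $c_j$, and in particular no enabling condition ever mentions $v$.

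Granting these observations, each of the three items follows by the same two-step argument. For item~1, since $t_1\in T_i$ touches only $l_i$ and $v$ while the enabling of $t_2\in T_j$ depends only on $l_j$ and $c_j$ with $j\neq i$, firing $t_1$ leaves $t_2$ enabled, and symmetrically firing $t_2$ leaves $t_1$ enabled; performing the two firings in either order yields a state whose locality vector carries $l'_1$ in position $i$, $l'_2$ in position $j$, and is otherwise that of $s$, whose clock vector is exactly $\vec{c}$, and whose variable component equals $f_2(f_1(v))$ when $t_1$ is fired first and $f_1(f_2(v))$ when $t_2$ is fired first. Hence $s_3$ and $s_4$ agree on all components except possibly $v$, so $s_3=s_4$ iff $f_2\circ f_1(v)=f_1\circ f_2(v)$, which is the claimed commutation condition. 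Items~2 and~3 are handled identically: in item~2, firing $t\in T_i$ affects neither the locality nor the clock of $A_j$, and the reset of $A_j$ affects neither $l_i$, $c_i$, nor $v$, so the two changes remain mutually enabled, and performing both in either order sets $l_i$ to $l'$, $v$ to $f(v)$, $l_j$ to $l_j^1$ and $c_j$ to $0$ — an outcome independent of the order; in item~3, the reset of $A_i$ and the reset of $A_j$ act on the pairwise disjoint pairs $(l_i,c_i)$ and $(l_j,c_j)$ and leave $v$ untouched, so once more they stay mutually enabled and commute to a single state.

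There is no genuine obstacle here; the step needing a little care is the ``iff'' in item~1, where one must check explicitly that the locality and clock parts of $s_3$ and $s_4$ coincide, so that the only possible discrepancy lies in $v$ — this is precisely where the observation that transition firings never modify clocks is used. A secondary point worth stating once is that $v$ is irrelevant to every enabling predicate, which is what makes the ``if two changes are enabled, neither disables the other'' half of persistence immediate in all three cases.
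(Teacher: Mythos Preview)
Your proposal is correct and follows essentially the same route as the paper's proof: both arguments rest on the observations that a transition of agent $A_i$ only alters $l_i$ and $v$, a reset of $A_j$ only alters $l_j$ and $c_j$, and no enabling condition depends on $v$, from which the three persistence claims are then read off directly. Your write-up is somewhat more explicit in separating the ``still enabled'' and ``same resulting state'' halves, but the content is the same; the paper additionally remarks in each item why the hypothesis $i\neq j$ is needed, which you may wish to mention as well even though it is not strictly required by the statement.
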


\begin{proof} 
\begin{enumerate}
    \item The property results from the fact that transitions do not modify clocks, and a transition of some agent only modifies the locality of the latter, together with the common variable $V$;
    in $s_3$ the variable becomes $f_2\circ f_1(v)$, while in $s_4$ the variable becomes $f_1\circ f_2(v)$.
    Note that if $i=j$, $s_1$ does not enable transition $t_2$ since, from the acyclicity hypothesis, the locality of $A_i$ is changed, hence is not the source of $t_2$ (and symmetrically);
    \item the property results from the fact that a transition does not modify any clock, 
    and a reset of $A_j$ only modifies the locality and clock of the latter; 
    the common variable $V$ will have the value $f(v)$ after both the execution of the transition followed by the reset as well as after the reset followed by the transition. 
    Note that $i$ may not be the same as $j$ here since the reset of $A_i$ may only occur in locality $l_i^{m_i}$, while no transition may occur there;
    \item the property results from the fact that a reset of an agent only modifies the locality and clock of the latter.
    Note that, if $i=j$, since after a first reset the locality becomes $l_i^1$,
    so that a second one may only occur if $l_i^1=l_i^{m_i}$, \ie, $A_i$ has a single locality, no transition and does not act on the common variable, hence may be dropped.
\end{enumerate}
\end{proof}

\subsection{Constraints}

We define in this subsection two types of constraints, which are motivated by the properties of our target application domain and which will be used to obtain interesting properties.

A \gmodel{} is called \emph{strongly live} if it satisfies the following constraint:

\begin{constraint}  \hspace*{1cm}
 \begin{enumerate}
    \item If the initial locality of some agent $A_i$
    is the terminal one ($l_i=l_i^{m_i})$, then the initial value of clock $C_i$ satisfies $\init_i\leq E_i$;
    \item otherwise (when $l_i\neq l_i^{m_i}$), we have $\init_i\leq \max\{b \mid (l_i,f,[a,b],l')\in \post{l_i}\})$;
    \item moreover, for each agent $A_i$, if $l\in L_i\setminus\{l_i^1,l_i^{m_i}\}$, then \\
    $\max\{b \mid (l',f,[a,b],l)\in \pre{l}\} \leq \min\{b' \mid (l,f',[a',b'],l'')\in \post{l}\}$;
    \item and we have $\max\{b \mid (l',f,[a,b],l_i^{m_i})\in \pre{l_i^{m_i}}\} \leq E_i$.
\end{enumerate}   
\label{constr:liveness}
\end{constraint}

The first two constraints ensure that, when the system is started, either in the terminal locality or in a non terminal one of some agent, the time is not blocked and we shall have the possibility to perform an action in some future.
The next constraint ensures that whenever a non terminal locality is entered, 
any (and not only some) transition originated from that locality will have the possibility to occur in some future (the case when we enter an initial locality is irrelevant since resets reinitialise the corresponding clock to 0).
The last constraint captures similar features in the case when we enter a terminal locality.

In other words, each transition or reset in $\post{l}$ is enabled when entering $l$ or will be enabled in the future (after possibly some time passings in order to reach the lower bound $a$).

\begin{proposition} 
\label{prop:constraints1}
In a \gmodel{} satisfying Constraints~\ref{constr:liveness},
after any evolution $\omega$, each transition and each reset (as well as time passings) may be fired in some future.
\end{proposition}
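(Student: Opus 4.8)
The plan is to argue by induction on the length of the evolution $\omega$ that after executing $\omega$ we reach a state $s=(\vec l,\vec c,v)$ in which, for every agent $A_i$, one of the following invariants holds: either $l_i\neq l_i^{m_i}$ and $c_i\leq\max\{b\mid(l_i,f,[a,b],l')\in\post{l_i}\}$, or $l_i=l_i^{m_i}$ and $c_i\leq E_i$. Call this the \emph{slack invariant}. Constraint~\ref{constr:liveness}(1)--(2) is exactly the statement that $\Init$ satisfies the slack invariant, which gives the base case. Once the invariant is established, the conclusion about ``each transition and each reset and time passings may be fired in some future'' follows: in a state satisfying the slack invariant, if $c_i<b$ for the relevant bound then a time increase is enabled (one checks the time-increase precondition holds for \emph{every} agent simultaneously, which is where we use that the invariant is universally quantified over $i$); repeatedly increasing time while the slack invariant is preserved eventually brings each agent's clock into the firing interval of any chosen outgoing transition (or up to $E_i$ for a reset), and since the graph is finite and acyclic between resets, every locality and hence every transition is reachable this way.

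The core of the argument is the inductive step: assuming $s$ satisfies the slack invariant, show that any of the three kinds of state change leads to a state still satisfying it. For a time increase: by the precondition, each agent with $l_i\neq l_i^{m_i}$ has some outgoing transition with $c_i<b$, so $c_i+1\leq b\leq\max\{b\mid\cdots\}$, and each agent in its terminal locality has $c_i<E_i$, so $c_i+1\leq E_i$; the invariant is preserved for all agents at once. For a reset of $A_j$: agent $A_j$ moves to $l_j^1$ with $c_j=0$, and $0$ is trivially below any bound in $\post{l_j^1}$ (the set is nonempty because $l_j^1\neq l_j^{m_j}$ in the non-degenerate case, and if $\post{l_j^1}=\emptyset$ the agent has a single locality and can be ignored); the other agents are untouched. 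For a transition firing $(l,f,[a,b],l')\in T_i$: agent $A_i$ moves from $l$ to $l'$ with $c_i$ unchanged; we need $c_i$ to respect the bound attached to $l'$. If $l'=l_i^{m_i}$, Constraint~\ref{constr:liveness}(4) gives $c_i\leq b\leq\max\{b\mid\cdots\in\pre{l_i^{m_i}}\}\leq E_i$. If $l'$ is an intermediate locality, Constraint~\ref{constr:liveness}(3) gives $c_i\leq b\leq\max\{b\mid\cdots\in\pre{l'}\}\leq\min\{b'\mid\cdots\in\post{l'}\}\leq\max\{b'\mid\cdots\in\post{l'}\}$. The case $l'=l_i^1$ cannot arise since $\pre{l_i^1}=\emptyset$. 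Other agents are unaffected, so the invariant is globally preserved.

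The main obstacle, and the step deserving the most care, is the passage from the slack invariant to the actual conclusion, namely verifying that a \emph{global} time increase is always available when some agent still has slack. The subtlety is that the time-increase rule requires \emph{every} agent to be able to advance; an agent sitting in its terminal locality with $c_i=E_i$, or at an intermediate locality with its clock already at the maximum outgoing bound, blocks time. One must check that such a blocked agent always has a discrete move available (a reset, or an outgoing transition whose interval contains the current clock value) so that we can first perform that discrete move — preserving the slack invariant by the inductive step just proved — and thereby unblock time; and that iterating this process, interleaved with time increases, drives the chosen agent's clock to the target value in finitely many steps. Here finiteness and acyclicity of each agent's transition graph between resets, together with the fact that the reset period $E_i$ bounds the clock, guarantee termination of this procedure.
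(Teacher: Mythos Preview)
Your proof is correct and follows essentially the same approach as the paper: both establish the slack invariant by induction on the length of $\omega$ (the paper does this more tersely, simply deferring to the semantics and Constraints~\ref{constr:liveness}.3--4), then use it together with acyclicity of each agent's locality graph and $E_i>0$ to rule out deadlock, hence obtain time passing, hence resets, hence every transition. Your treatment of the inductive step is more explicit than the paper's, while the paper spells out slightly more concretely how to reach an arbitrary transition $t$ of agent $A_i$ (perform reset $r_i$, then walk the \acygra{} from $l_i^1$ to the source of $t$, firing each intermediate transition as soon as its lower bound is reached, which Constraint~\ref{constr:liveness}.3 guarantees is possible).
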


\begin{proof}
We may first observe by induction on the length of $\omega$ that, if $s=(\vec{l},\vec{c},v)$ is the state reached after the evolution $\omega$, for any agent $A_i$ we have
$(l_i=l_i^{m_i})\impl c_i\leq E_i$ and $(l_i\neq l_i^{m_i})\impl c_i\leq \max\{b \mid (l_i,f,[a,b],l')\in \post{l_i}\})$,
\ie, the same \gmodel{} with initial state $s$ also satisfies Constraint~\ref{constr:liveness} (the last two ones do not rely on the initial state).
The property is trivial for $\omega=\leer$.
If the property is satisfied for some $\omega$, it remains so for any extension, from the definition of the semantics of \gmodel{} and the last two points of Constraint~\ref{constr:liveness}.

It also results from the same remark that any evolution $\omega$ satisfying the mentioned properties may be extended (there is no deadlock). 
It remains to show that any extension may be performed in some future.

For time passing, we may observe that, if time passing may never be performed, since the set of localities for each agent has the form of a \acygra, extending $\omega$ will finally perform a reset, and since each $E_i$ is strictly positive we shall finally perform all the resets and stop at some point, which is forbidden. Hence we are sure time passings will be possible.

Since time passings may always be performed in the future, all resets will be performed eventually.

Finally, let $t=(l'_i,f,[a,b],l''_i)\in T_i$. From the same argument about time passings, 
it will be possible to eventually perform reset $r_i$, 
then follow a path going from $l_i^1$ to $l'_i$ in the \acygra{} of $A_i$, 
performing each transition in turn when reaching the corresponding $a$, 
due to Constraint~\ref{constr:liveness}.3.
\end{proof}

The next constraint is a syntactic manner of ensuring the
acyclicity of the \gmodel's dynamics (\ie, its state space): 
\begin{constraint} \hspace*{1cm}
\begin{enumerate} 
    \item $\val \defeq W \times X$ and there exist an agent $A_i$ such that 
    in all paths of transition from $l_i^1$ to $l_i^{m_i}$, there exists a transition $t\defeq (l,f,[a,b],l')$ 
    such that for all $(w,x) \in \val$, $f(w,x) = (w',x')$ with $x < x'$;
    \item and there is no $f \in F$ such that for some $(w,x)\in \val$,
    we have $f(w,x) = (w',x')$ with $x > x'$.
\end{enumerate}
\label{constr:acyclic}
\end{constraint}

The first constraint ensures that at least one agent increments the $X$ part of variable $V$ at least once between two of its resets.
The second one ensures that no function can decrease the $X$ part of variable $V$.

In other words, the $X$ part of $v$ increases in each cycle of agent $A_i$, which results in an absence of cycles in the whole state space of the \gmodel.

\begin{proposition}\label{DAG.prop}
A \gmodel{} satisfying Constraint \ref{constr:acyclic} is acyclic.
\end{proposition}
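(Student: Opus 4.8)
The plan is to exhibit a numerical progress measure that never decreases along an evolution, and then to show that a cycle in the state space would force it to stay constant, which Constraint~\ref{constr:acyclic}.1 forbids. It suffices to prove that no state $s$ admits a non-empty evolution $\omega$ returning to $s$, since the existence of a directed cycle in the state space is equivalent to the existence of such an $\omega$. Write $\val\defeq W\times X$ as in Constraint~\ref{constr:acyclic} and, for a value $v=(w,x)\in\val$, call $x$ its $X$-part. By Constraint~\ref{constr:acyclic}.2 no function of $F$ decreases the $X$-part, and resets and time increases leave $v$ untouched; hence the $X$-part of the current value is non-decreasing along every evolution. Along $\omega$ it is therefore both non-decreasing and equal at its two ends, so it is constant; in particular, no transition whose function strictly increases the $X$-part on \emph{every} input is fired during $\omega$. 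Applied to the agent $A_i$ distinguished by Constraint~\ref{constr:acyclic}.1, this means the corresponding transition $t$ (which lies on every path of transitions from $l_i^1$ to $l_i^{m_i}$) is never fired during $\omega$.

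I would then distinguish cases according to how many resets of $A_i$ occur along $\omega$. If at least two occur, then between two consecutive ones the locality of $A_i$ travels from $l_i^1$ to $l_i^{m_i}$ using transitions of $T_i$ only, i.e.\ it runs through a whole path of transitions of $A_i$; by Constraint~\ref{constr:acyclic}.1 that path contains $t$, so $t$ is fired --- a contradiction. If exactly one reset of $A_i$ occurs, let $\ell$ be the locality of $A_i$ in $s$: before the reset $A_i$ moves from $\ell$ to $l_i^{m_i}$ and after it from $l_i^1$ back to $\ell$, each time along $T_i$ only. Glued at $\ell$, these two walks form a walk from $l_i^1$ to $l_i^{m_i}$ in the \acygra{} of $A_i$; since that graph is acyclic it has no non-trivial closed walk, so the glued walk is a genuine directed path and again contains $t$ by Constraint~\ref{constr:acyclic}.1 --- a contradiction.

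The remaining case, where $A_i$ performs no reset along $\omega$, is the delicate one. Here the locality of $A_i$ only moves along transitions of the \acygra{} of $A_i$ and returns to its starting value, so --- again because an acyclic graph has no non-trivial closed walk --- $A_i$ fires no transition at all during $\omega$. Hence the clock $C_i$ changes only through time-increase steps, and since it returns to its starting value $\omega$ contains no time increase. I then claim $\omega$ contains no reset either: the first reset of any agent $A_j$ occurring in $\omega$ would require $c_j=E_j\ge 1$ just before it, but with no time increase and no earlier reset of $A_j$ the clock $C_j$ still holds the value it had in $s$; that value would thus be $E_j$, while after the reset $C_j$ stays at $0$ until the end of $\omega$, contradicting that $\omega$ returns to $s$. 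So $\omega$ consists only of transition firings; but then every agent's locality moves solely along its own \acygra{} and comes back to its starting value, which as before forces no transition to be fired, so $\omega=\leer$ --- contradicting that $\omega$ is non-empty. This closes all cases, so the state space has no directed cycle and the \gmodel{} is acyclic. The step I expect to be the main obstacle is exactly this last case: excluding a state-space cycle that never completes a full reset round of $A_i$, which is handled by propagating the ``no progress'' conclusion from the $X$-part down to $C_i$, then to time increases, then to all resets, and finally to all transitions.
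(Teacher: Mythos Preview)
Your proof is correct and follows essentially the same route as the paper's: both argue that along a putative cycle the $X$-part stays constant while the distinguished agent $A_i$ must complete a full iteration from $l_i^1$ to $l_i^{m_i}$ (hence fire an $X$-increasing transition), and your case split on the number of resets of $A_i$ is simply a more detailed unpacking of the paper's terse claim that all agents must have reset. One small imprecision worth noting: Constraint~\ref{constr:acyclic}.1 guarantees an $X$-increasing transition on \emph{each} path from $l_i^1$ to $l_i^{m_i}$ (possibly a different one per path), not a single $t$ common to all paths---but your argument goes through unchanged under the correct reading.
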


\begin{proof}
If there is a cycle, it means that there exists a path from the initial state with at least two different states in the path $s_1=(\vec{l},\vec{c},v)$ and $s_2=(\vec{l'},\vec{c'},v')$, which are actually identical.
Since the localities of each agent form a static \acygra{} determined by its transitions and each $E_i$ is strictly positive, 
having $\vec{l}=\vec{l'}$ and $\vec{c}=\vec{c'}$ may only happen
if all agents have done at least one reset between $s_1$ and $s_2$. 
Indeed, if there is no reset, since $\vec{c}=\vec{c'}$ we may only have transitions in the cycle, 
but this is incompatible to have \acygra{}s in each agent.
Moreover, since between two resets of an agent time strictly increases, 
$\vec{c}=\vec{c'}$ may only occur if all agents have performed one or more resets.
Thus, it is enough to observe that variable $V$ cannot decrease from Constraint~\ref{constr:acyclic}.2, 
and a transition $t$ like in 
Constraint~\ref{constr:acyclic}.1 should occur, 
guaranteeing that
$v\neq v'$. 
\end{proof}

\begin{definition}
A \model{} 
is a \gmodel{} satisfying Constraints \ref{constr:liveness} and \ref{constr:acyclic}.
\end{definition}

A \model{} may be non-deterministic but it is strongly live and has a \acygra{} state space.
For instance, the \gmodel{} $M$ from Ex.~\ref{ex:mapt} satisfies both constraints 
(acyclicity is satisfied due to $y$ being incremented in all cycles of $A_1$) and so is actually a \model.

\section{Translation into high level Petri nets}

A high level Petri net \cite{DBLP:series/eatcs/Jensen92}
can be viewed as an abbreviation of a low-level one \cite{peterson} 
where tokens are elements of some set of values that can be checked and 
updated when transitions are fired.
Here, we express a \gmodel{} as a high level Petri net to be implemented with \zinc{}.

Formally, a high level Petri net is a tuple $(S,T,\lambda,M_0)$ where:
\begin{itemize}
    \item $S$ is a finite set of places;
    \item $T$ is a finite set of transitions;
    \item $\lambda$ is a labelling function on places, transitions and arcs such that
    \begin{itemize}
        \item for each place $s \in S$, $\lambda{}(s)$ is a set of values defining the type of $s$, 
        \item for each transition $t \in T$, $\lambda{}(t)$ is a Boolean expression with variables and constants defining the guard of $t$ and 
        \item for each arc $(x,y) \in (S\times{}T)\cup(T\times{}S)$, $\lambda{}(x,y)$ is the annotation of the arc from $x$ to $y$, driving the  
        production or consumption of tokens.
    \end{itemize}
    \item $M_0$ is an initial marking associating tokens to places, according to their types.
\end{itemize}

The semantics of a high level Petri net is captured by a transition system
containing as states all the markings, which are reachable from the initial marking $M_0$.
A marking $M'$ is directly reachable from a marking $M$ if there is a transition $t$ enabled at $M$, whose
firing leads to $M'$; it is reachable from $M$ if there is a sequence of such firings leading to it.
A transition $t$ is enabled at some marking $M$ if the tokens in all the input places of $t$ allow to satisfy
the flow expressed by the annotations of input arcs and the guard of $t$, through a valuation of the variables involved in the latter.
The firing of $t$ consumes the concerned tokens in input places of $t$ and produces 
tokens on output places of $t$, according to the annotations of the output arcs and the same valuation.

\begin{figure*}[htb]
\tikzstyle{location}=[draw,circle,inner sep=2pt]
\tikzstyle{transition}=[draw,rectangle,inner sep=3pt]
\centering
\begin{tikzpicture}[>=latex', xscale=1, yscale=.7,every node/.style={scale=1}]
\node[location, label=left:{$s_1$}] at (-3,0) (s1) {$1$};
\node[location, label=left:{$s_2$}] at (0,-2) (s2) {$2$};
\node[location, label=right:{$s_3$}] at (4,0) (s3) {$(0,0)$};
\node[transition, label ={$x>0$}] at (0,0) (t) {$t$};
\draw[->] (s1)  -- node[above,midway] {$x$} (t);
\draw[->] (s2)  -- node[left,midway] {$y$} (t);
\draw[->] (s3)  -- node[above,midway] {$(w,z)$} (t);
\draw[->, rounded corners] (t)  -- (1,-1) --node[above,midway] {$(w+x,z+y)$} (3,-1)-- (s3);
\end{tikzpicture}
 \caption{\label{fig:petri_firing} A high level Petri net. } 
\end{figure*}
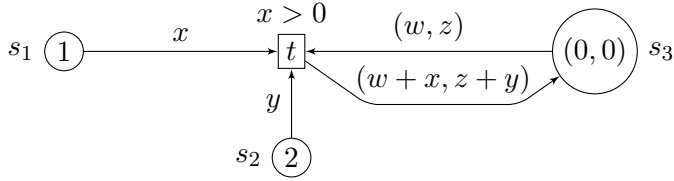

Fig.~\ref{fig:petri_firing} shows an example of a high level Petri net where
place types are $\mathbb N$ for $s_1$ and $s_2$, and $\mathbb N\times \mathbb N$ for $s_3$.
Transition $t$ is enabled at the initial marking since there exists a valuation of variables
in the annotations of arcs and in the guard of $t$, with values from tokens,
$x\mapsto 1$, $y\mapsto 2$, $w\mapsto 0$, $z\mapsto 0$, 
that satisfies the guard. The firing of $t$ consumes the tokens in all three places
and produces a new token $(1,2)$ in place $s_3$.

\begin{definition}
\label{def:translation}
Given a \gmodel{} $Q = (\val,F,A,Init)$ with $|A| = n$, its translation to a high level Petri net $N = translate(P) = (S,T,\lambda,M_0)$ is defined as follows:
\begin{itemize}
    \item $S = \{s_A,s_C,s_V\}$ with $\lambda(s_A) \defeq L_1 \times \cdots \times L_n$ where $L_i$ is the set of localities of agent $A_i$ (its $j$th element will be denoted $l_i^j$), 
    $\lambda(s_C) \defeq  \mathbb N^n$ and $\lambda(s_V) \defeq \val$; 
    For any token $x$ of the type $\lambda(s_A)$ or $\lambda(s_C)$, we denote by $x[i]$ the $i^{\textrm{\tiny{th}}}$ element of the list.
    \item $T \defeq  T_{trans} \cup T_{reset} \cup \{ t_{time} \}$ where
    \begin{itemize}
        \item $T_{trans}$ is the smallest set of transitions such that, for each agent $A_i=(L_i,C_i,T_i,E_i)$ in $A$ 
        and for each transition $(l,f,[a,b],l') \in T_i$, there is a transition $t \in T_{trans}$ such that $\lambda(s_A,t) \defeq  x$, $\lambda{}(s_C,t) \defeq  y$, $\lambda{}(s_V,t) \defeq  z$, $\lambda(t,s_A) \defeq  x'$ where $x'[i] \leftarrow l'$ and $\forall j \neq i$, $x'[j] \leftarrow x[j]$, $\lambda{}(t,s_C) \defeq  y$, $\lambda(t,s_V) \defeq  f(z)$ and $\lambda{}(t) \defeq (x[i] = l) \land (a \leq y[i] \leq b)$. This is equivalent to the set of transitions of the \gmodel{}.
        \item $T_{reset}$ is the smallest set of transitions such that, for each agent $A_i=(L_i,C_i,T_i,E_i)$ in $A$, there is a transition $t\in T_{reset}$ such as $\lambda(s_A,t) \defeq x$, $\lambda{}(s_C,t) \defeq  y$, $\lambda(t,s_A) \defeq  x'$ where $x'[i] \leftarrow l_i^1$ and $\forall j \neq i$, $x'[j] \leftarrow x[j]$, $\lambda{}(t,s_C) \defeq y'$ where $y'[i] \leftarrow 0$ and $\forall j \neq i$, $y'[j] \leftarrow y[j]$, and $\lambda{}(t) \defeq (x[i] = l_i^{m_i}) \land (y[i] =  E_i)$ where ${m_i} \defeq |L_i|$. This is equivalent to the set of clock resets of the \gmodel{}.
        \item $\lambda(s_A,t_{\ttime}) \defeq x$, $\lambda{}(s_C,t_{\ttime}) \defeq y$, $\lambda(t_{\ttime},s_A) \defeq x$, $\lambda{}(t_{\ttime},s_C) \defeq y'$, where $\forall i \in [1,n]$, $y'[i] \leftarrow y[i] + 1$, and $\lambda{}(t_{\ttime}) \defeq G_1 \land \cdots \land G_n$ where $G_i$ acts as the "upper bound guard" for all the transitions in agent $A_i$, i.e., 
        $G_i \defeq (g_1 \lor \cdots \lor g_{m_i})$ with ${m_i} = |L_i|$ and $\forall j \in [1,{m_i}-1]$, $g_j \defeq (x[i] = l_i^j) \land (y[i] < B$), where $B = \max\{b|(l_i^j,f,[a,b],l')\in \post{l_i^j}\}$ is the highest upper bound of the intervals from all outgoing transitions of $l_i^j$ 
        and $g_{m_i} \defeq (x[i] = l_i^{m_i}) \land (y[i] < E_i)$. This is equivalent to a time increase.
    \end{itemize}
    \item $(M_0(s_A),M_0(s_C),M_0(s_V)) = \Init$ is the initial marking 
\end{itemize}
\end{definition}

The translation associates singletons as arc annotations for all arcs.
As a consequence, during the execution, starting from the initial marking which associates one token to each place, 
there will always be exactly one token in each of the three places.
Each reachable marking, where $s_A$ contains $\vec{l}$, $s_C$ contains $\vec{c}$ and $s_V$ contains $v$,
encodes a state $(\vec{l},\vec{c}, v)$ of the considered \gmodel.

Figure~\ref{fig:petriM} sketches the Petri net translation of the \gmodel{} from Ex.~\ref{ex:mapt}.
At the initial marking, $t_1$, $t_2$, $t'_1$ and $t'_2$ are not enabled because 
the token read from $s_C$ (\ie, the vector of clock values) does not satisfies the guards, 
while $r_1$ and $r_2$ are not enabled because the token read from $s_A$ (\ie, the vector of localities) 
does not satisfies the guards.
On the other hand, the transition $\ttime$ is enabled.
Its firing reads\footnote{means that consumes and produces the same tokens} tokens in places $s_V$ and $s_A$, consumes $(0,0)$ and produces $(1,1)$ in $s_C$.
At this new marking, $\ttime$, $t1$ and $t2$ are enabled and the process continues 
exactly as in the \gmodel{}.

\begin{figure*}[htb]
\tikzstyle{location}=[draw,circle,inner sep=2pt]
\tikzstyle{transition}=[draw,rectangle,inner sep=3pt]
\centering
\begin{tikzpicture}[>=latex', xscale=1, yscale=.7,every node/.style={scale=1}]
\node[location, label=right:{$s_A$}] at (0,0) (sa) {$(1,3)$};
\node[location, label=left:{$s_C$}] at (-10,0) (sc) {$(0,0)$};
\node[location, label=below:{$s_V$}] at (-5,-8) (sv) {$(0.5,0)$};
\node[transition, label ={$\begin{array}{l} ((l_1 = 1 \land c_1 < 3) \lor (l_1 = 2 \land c_1 < 5)) \\ \land ((l_2 = 3 \land c_2 < 3) \lor (l_2 = 4 \land c_2 < 5))
\end{array}$}] at (-5,4) (t) {$\ttime$};
\node[transition, label={$l_1 = 2 \land c_1 = 5$}] at (-5,2) (r1) {$r_1$};
\node[transition, label={$l_2 = 4 \land c_2 = 5$}] at (-5,0) (r2) {$r_2$};
\node[transition, label=below right:{$l_1 = 1 \land 1 \leq c_1 \leq 1$}] at (-10,-8) (f1) {$t_1$};
\node[transition, label=below:{$l_2 = 3 \land 1 \leq c_2 \leq 1$}] at (-8.5,-4.5) (f3) {$t_2$};
\node[transition, label=below left:{$l_1 = 1 \land 2 \leq c_1 \leq 3$}] at (0,-8) (f2) {$t'_1$};
\node[transition, label=below:{$l_2 = 3 \land 2 \leq c_2 \leq 3$}] at (-1.5,-4.5) (f4) {$t'_2$};
\draw[<->,rounded corners] (sa)  |- node[below,near end] {$(l_1,l_2)$} (t);
\draw[<->,rounded corners] (sa)  -- node[above,midway] {$(l_1,l_2)/(1,l_2)$} (r1);
\draw[<->,rounded corners] (sa)  -- node[above,midway] {$(l_1,l_2)/(l_1,3)$} (r2);
\draw[<->,rounded corners] (sa)  -- node[below,near start] {$(l_1,l_2)/(2,l_2)$} (f1);
\draw[<->,rounded corners] (sa)  -- node[above,near start] {$(l_1,l_2)/(l_1,4)$} (f3);
\draw[<->,rounded corners] (sa)  -- node[above,midway] {$(l_1,l_2)/(2,l_2)$} (f2);
\draw[<->,rounded corners] (sa)  -- node[above,midway] {$(l_1,l_2)/(l_1,4)$} (f4);
\draw[<->,rounded corners] (sc)  |- node[below,near end] {$(c_1,c_2)/(c_1+1,c_2+2)$}  (t);
\draw[<->,rounded corners] (sc)  -- node[above,midway] {$(c_1,c_2)$} (r1);
\draw[<->,rounded corners] (sc)  -- node[above,midway] {$(c_1,c_2)$} (r2);
\draw[<->,rounded corners] (sc)  -- node[above,midway] {$(c_1,c_2)$} (f1);
\draw[<->,rounded corners] (sc)  -- node[above,midway] {$(c_1,c_2)$} (f3);
\draw[<->,rounded corners] (sc)  -- node[above,near start] {$(c_1,c_2)$} (f2);
\draw[<->,rounded corners] (sc)  -- node[above,near start] {$(c_1,c_2)$} (f4);
\draw[<->,rounded corners] (sv)  -- node[above,midway] {$(x,y)/(2x,y+1)$} (f1);
\draw[<->,rounded corners] (sv)  -- (-6,-5) -- node[above,near start] {$(x,y)/(\frac{x}{2},y)$} (f3);
\draw[<->,rounded corners] (sv)  -- node[above,midway] {$(x,y)/(x+1.3,y+1)$} (f2);
\draw[<->,rounded corners] (sv)  -- (-4,-5) -- node[above,near start] {$(x,y)/(2x,y)$} (f4);
\end{tikzpicture}
 \caption{\label{fig:petriM} Petri net translation of the \gmodel{} from Ex.~\ref{ex:mapt} with the initial marking. Arcs are bidirectional and annotated by pairs $w/z$ (or $w$ instead of $w/w$) meaning that $w$ is the label of the arc from place to transition and $z$ of the opposite one.} 
\end{figure*}
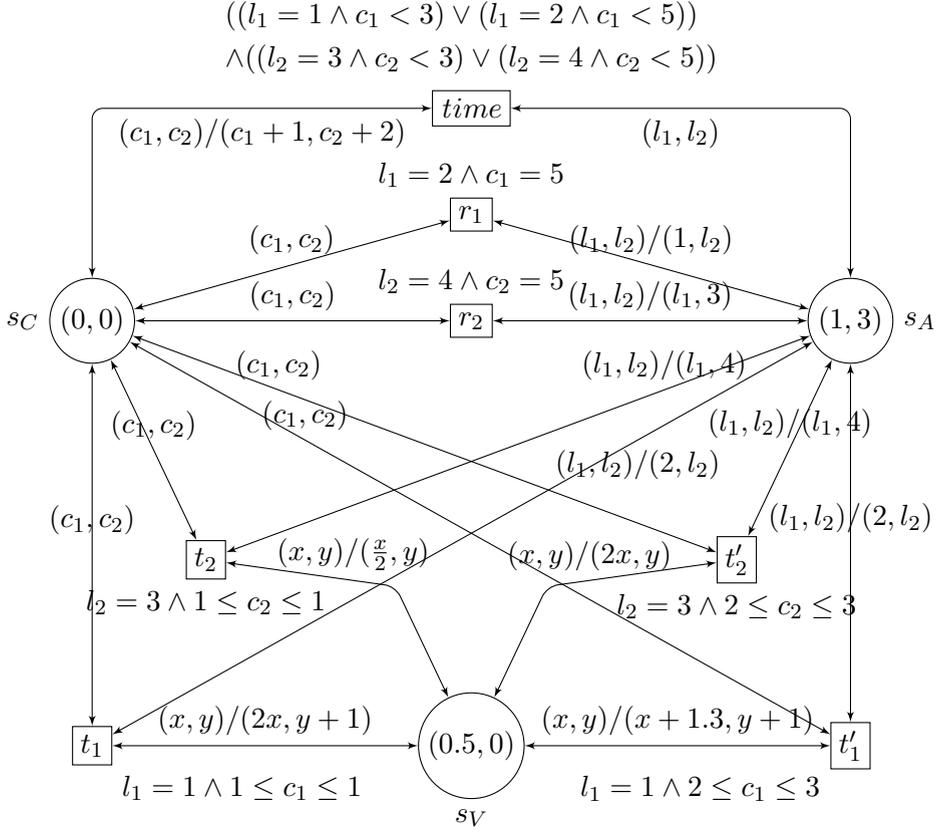

\begin{proposition}
A \gmodel{} $Q$ and its translated Petri net $N = translate(Q)$ have equivalent state spaces and semantics.
\end{proposition}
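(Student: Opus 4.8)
The plan is to exhibit an isomorphism between the reachability graph of $N$ and the reachable part of the transition system defining the semantics of $Q$. First I would make precise the structural invariant already observed after Definition~\ref{def:translation}: every arc annotation created by $translate$ is a singleton, and for each of the three places $s_A,s_C,s_V$ every transition of $N$ either leaves that place untouched or consumes exactly one token from it and produces exactly one token back; hence a trivial induction on the length of a firing sequence from $M_0$ shows that every reachable marking $M$ carries exactly one token in each of $s_A,s_C,s_V$. Such a marking is therefore faithfully described by the triple $\mathit{enc}(M)\defeq(\vec{l},\vec{c},v)$ of its three tokens, with $\vec{l}\in L_1\times\cdots\times L_n$, $\vec{c}\in\mathbb{N}^n$ and $v\in\val$, which is exactly the shape of a state of $Q$. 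The map $\mathit{enc}$ is injective on these markings and $\mathit{enc}(M_0)=\Init$, so it remains to show that $\mathit{enc}$ is a step-preserving bijection onto the reachable states of $Q$.

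For that I would check, for every reachable marking $M$ with $\mathit{enc}(M)=s$, that the transitions of $N$ enabled at $M$ are in label-preserving bijection with the state changes enabled at $s$, and that corresponding steps land on $\mathit{enc}$-related configurations; a case analysis on the three families $T_{trans}$, $T_{reset}$, $\{t_{\ttime}\}$ does this. For $t\in T_{trans}$ arising from $(l,f,[a,b],l')\in T_i$, the guard $(x[i]=l)\land(a\le y[i]\le b)$ instantiated with the tokens of $M$ is literally the firing condition $l_i=l\land a\le c_i\le b$, and the output arcs set the $i$-th component of $s_A$ to $l'$, leave $s_C$ unchanged and replace $v$ by $f(v)$, matching the effect of firing that transition. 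For $t\in T_{reset}$ of $A_i$, the guard $(x[i]=l_i^{m_i})\land(y[i]=E_i)$ is the enabling condition of a reset of $A_i$, and the output arcs set component $i$ of $s_A$ to $l_i^1$, component $i$ of $s_C$ to $0$ and leave $s_V$ alone, as in the semantics. For $t_{\ttime}$, the output arc increments every clock, matching a time increase; the only thing to verify is that $G_1\land\cdots\land G_n$ holds at $M$ exactly when time may increase at $s$, \ie, that each $G_i$ is equivalent to ``either $l_i=l_i^{m_i}$ and $c_i<E_i$, or there is $(l,f,[a,b],l')\in T_i$ with $l_i=l$ and $c_i<b$''. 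This reduces to observing that the disjunct $g_j$ picks out the case $l_i=l_i^j$ and, for $j<m_i$, asserts $c_i<B$ with $B=\max\{b\mid(l_i^j,f,[a,b],l')\in\post{l_i^j}\}$, which says precisely that some outgoing transition of $l_i^j$ has upper bound greater than $c_i$, while $g_{m_i}$ is exactly $l_i=l_i^{m_i}\land c_i<E_i$; here one uses that the $l_i^j$ enumerate $L_i$ with $l_i^{m_i}$ the unique final locality, so every $l_i^j$ with $j<m_i$ does have an outgoing transition and the maximum defining $B$ is over a non-empty set.

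Combining the two directions of this correspondence with the base case $\mathit{enc}(M_0)=\Init$ and inducting on path length yields that a marking is reachable in $N$ iff it encodes a reachable state of $Q$, and that $\mathit{enc}$ restricts to an isomorphism between the reachability graph of $N$ and that of $Q$, also preserving the kind (transition firing, reset, time increase) of each step; this is the asserted equivalence of state spaces and semantics. I expect the main obstacle to be precisely the bookkeeping in the guard of $t_{\ttime}$: one has to be confident that a single conjunction over agents, each conjunct itself a disjunction over localities, really captures the universally quantified, two-case time-increase rule of the semantics, and in particular that the translation never produces an empty $\max$ because no non-final locality is a sink in the \acygra{} of an agent. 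Everything else is a routine unfolding of Definition~\ref{def:translation} against the semantics of \gmodel{}s.
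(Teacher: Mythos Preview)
Your proposal is correct and is precisely the detailed unfolding that the paper elides: the paper's own proof is the one line ``Immediate from the definitions'', and your case analysis on $T_{trans}$, $T_{reset}$ and $t_{\ttime}$ together with the single-token invariant is exactly what makes that immediacy rigorous. There is no alternative idea to compare; you have simply written out what the authors left implicit.
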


\begin{proof}
Immediate from the definitions.
\end{proof}

\FloatBarrier

\section{Acceleration} 

Let us assume we are interested by the causality relation between transitions rather than by the exact dates of their firings. It is then often possible to reduce the size of the original state space.
To do so, we assume that the \gmodel{} satisfies Constraint \ref{constr:liveness}. 

We may first consider {\em action zones},
defined as maximum time intervals in which the same transitions and resets are enabled from the current state 
(note however that, when a reset is enabled, the zone has length $0$, 
since before an $E_i$ the corresponding reset is not enabled and we may not go beyond $E_i$). 
Instead of increasing time by unitary steps, we can then progress in one step 
from an action zone to another one, until we decide or must fire a transition or reset. 
This generally corresponds to increasing time by more than one unit at once. 
Note that, when we jump to a zone, we may choose any point in it since, by definition,
all of them behave the same with respect to enabled events;
in the following, we have chosen to go to the end of the zone since this allows to perform bigger time steps; this will also be precious when defining borders of layers. 

However, not all action zones reachable from a given state need to be explored: we may neglect action zones which are {\em dominated} by other ones, \ie, for which the set of enabled transitions and resets is included in another one.
As an example, let us assume that the sets of enabled transitions are successively (from the current state):
$\{t1,t2\}\rightarrow{\bf\{t1,t2,t3\}}\rightarrow\{t2,t3\}\rightarrow\{t3\}\rightarrow\{t3,t4\}\rightarrow{\bf\{t3,t4,t5\}}\rightarrow\{t4,t5\}\ldots$
(i.e., letting the time evolve, we first encounter $a3$, then $b1$, $b2$, $a4$, $a5$, $b3$, ...).
The maximal (non-dominated) action zones are indicated in bold.
We may thus first jump (in time) to the end of 
the zone allowing $\{t1,t2,t3\}$, then choose if we want to fire $t1$ or $t2$ or $t3$, or jump to the end of 
the zone allowing $\{t3,t4,t5\}$, where we can again decide to fire a transition or not (unless a firing is mandatory, \ie, there is no further non-dominated action zone). 

In order to pursue the analysis, let $s=(\vec{l},\vec{c},v)$ be the current state and, 
for each agent $A_i$ let 
\[
\begin{array}{lll}
B_i &\defeq  & 
    \left\{ \begin{array}{l}
        E_i-c_i  \mbox{ if } l_i=l_i^{m_i} \\
        \max\{b-c_i \mid {(l_i,f,[a,b],l')\in \post{l_i}}\}  \mbox{ otherwise} \\
        \end{array}\right. \\ \\
B & \defeq & \min\{B_i \mid i\in[1,n]\}
\end{array}
\]

From our hypotheses, each $B_i$, hence also $B$, is non-negative, and we may not let pass more than $B$ time units before choosing to fire a transition or a reset.
In particular, if $B=0$, increasing time would prevent any transition in some locality to ever be enable again.
To avoid such a local deadlock, we must choose a transition or reset to fire.

When time evolves, if we reach an $a$ or an $E$ the set of enabled transitions and resets increases 
(note that an $E$ behaves both as an $a$ and as a $b$), 
and if we overtake a $b$ (we may not overtake an $E$), this set shrinks.
We must thus find the first $b$ or $E$ preceded by at least one $a$.

This may be done as follows: let
\[
\begin{array}{lll}
{\mathbf a} &\defeq  & 
    \left\{ \begin{array}{ll}
        \min(\alpha) & \mbox{if } \alpha \defeq \{ a-c_i \mid \exists\; (l_i,f,[a,b],l'_i) \in T_i \mbox{ for some agent $A_i$} \\
        & \mbox{and } c_i<a\leq B \}~ \cup \{E_i-c_i \mid l_i = l_i^{m_i} \mbox{ for some agent $A_i$} \\
        & \mbox{and } c_i<E_i\leq B \} \neq \emptyset \\
        0  & \mbox{otherwise} \\
        \end{array}\right. \\ \\
\delta &\defeq &     
    \left\{ \begin{array}{ll}
            \min(\beta) & \mbox{if } \beta \defeq \{ b-c_i \mid \exists\;(l_i,f',[a,b],l')\in \post{l_i} \mbox{ for some agent $A_i$} \\ 
                & \mbox{with $0 < \mathbf{a}\leq b-c_i\leq B$}\} \cup \{E_i-c_i \mid l_i = l_i^{m_i} \mbox{ for some } \\ 
                & \mbox{agent $A_i$ with $E_i\leq c_i+B$ }\} \neq \emptyset \\
            0 & \mbox{otherwise} \\
        \end{array}\right. \\

\end{array}
\]

It may be observed that 
${\bf a}>0\rimpl\alpha\neq\emptyset\rimpl \beta\neq\emptyset\rimpl \delta>0$.

If ${\mathbf a} = 0$, that means that there is no way to increase the set of enabled transitions or resets in the future; there is thus no interest to let time evolve (and indeed $\delta=0$) and we must choose now a transition or a reset to be fired (time will possibly be allowed to increase in the new locality). 
Otherwise, we may fire a transition or a reset or perform a time jump of $\delta$, which is of at most $B$ time units.
Note that when a transition or reset is fired,
we need to recompute $B$; when a time shift (or jump) is performed, $\delta>0$ and we need to adjust all the clocks and $B$: $\forall i:c_i\leftarrow c_i+\delta$ and $B\leftarrow B-\delta$.

We may remark that the initial state as well as the states reached after a firing are not necessarily in a maximal zone. This may be checked easily: the first $b$ or $E$ is preceded by one or more $a$'s from this current state.
We may then force a time jump $\delta$ as computed above to reach the end of the first maximal zone before wondering if we shall perform a firing. 
However, this is not absolutely necessary and we may decide to perform a firing or a time jump at this current state. 

Finally, we may observe that, when we perform a firing in some agent $A_i$, we are positioned before $B$, hence before $B_i$ by definition.
From Constraints \ref{constr:liveness}.3 and \ref{constr:liveness}.4 
above, whatever the time jumps performed in the previous state, the first maximal zone is the same,
so that we do not miss a possible firing from the new current state. 

\begin{figure}[htb]
\begin{center}
\includegraphics[scale=0.4]{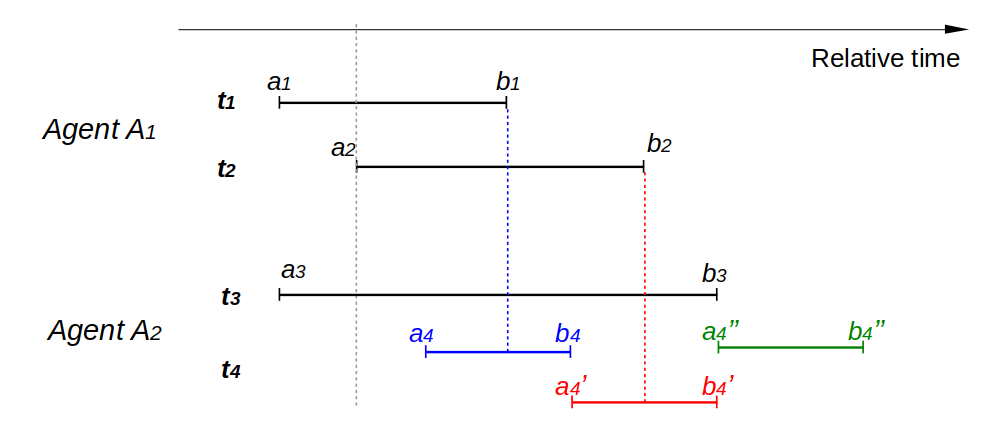} 
\caption{\label{fig:acceleration}Example of time increase based on the action zone acceleration. Current time is indicated by grey dots, while the maximal possible time increase for each variant is shown with its respective color (blue for $t_4$, red for $t_4'$ and green for $t_4''$).} 
\end{center}
\end{figure}

For a better understanding, let us consider the example of acceleration illustrated in Fig.~\ref{fig:acceleration},
with two agents $A_1$ and $A_2$.
To simplify the presentation, we shall assume that the clocks $C_1$ and $C_2$ are aligned, so that 
the time intervals of the transitions can be represented in the same space.
The current state of the system can be described as follows:
from the current locality of agent $A_1$, the outgoing transitions $t_1$ and $t_2$ 
can be fired respectively in the intervals $[0,3]$ and $[1,5]$,
while from the current locality of agent $A_2$, the outgoing transitions $t_3$ and $t_4$ 
(in blue in the figure) can be fired respectively in $[0,6]$ and $[2,4]$.
For any transition $t_i$, its lower and upper bounds will be referred to as $a_i$ and $b_i$.
Let us assume that the current time is currently at instant $1$. 
In such a case, the current action zone is $[a_2,a_4[$ and it enables $t_1$, $t_2$ and $t_3$.
The next action zone $[a_4,b_1]$ enables all four transitions (and is thus maximal). 
So, from the current action zone we may fire one of $t_1$, $t_2$ or $t_3$ or let the time pass.
The (accelerated) time increase should lead then to action zone $[a_4,b_1]$, for instance at $b_1$.
That way, we would include all the possible sequences of transitions, 
including the firing of $t_4$ followed by $t_1$. 

Now let us consider a variant of the example, in which $t_4$ is replaced by $t_4'$ 
(in red in the figure), with a time interval of $[4,6]$.
In that scenario, the current action zone is $[a_2,b_1]$ and it enables $t_1$, $t_2$ and $t_3$. 
The next zone is $]b1,a4'[$, which enables $t2$ and $t3$ only:
since the enabled transitions are included in the current action zone, this zone is not interesting
from a causality point of view. 
Finally, the zone $[a_4',b_2]$ enables $t_2$, $t_3$ and $t_4$, 
which is interesting because a new transition becomes enabled and 
the next time increase should lead to the end of this zone.
It is important to note that all the sequences involving $t_1$ are preserved, as the time increase 
is only one of the possibles evolution of the system, the firing of $t_1$, $t_2$ and $t_3$ also being possible. 

Finally, let us consider a second variant, in which $t_4$ is replaced by $t_4''$ (in green in the figure), 
with a time interval of $[6,8]$.
In that scenario, the current action zone is still $[a_2,b_1]$, which enables $t_1$, $t_2$ and $t_3$. 
The next action zone $]b1,b2]$ enables $t2$ and $t3$ only.
As before, the enabled transitions are included in the current action zone, 
which means that going to this zone is irrelevant. 
However, it is not possible to go further ahead since we reached $B$ 
(it corresponds to the time before reaching $b_2$). 
We must thus chose a transition to fire in the current zone. 
Transition $t_4''$ is presently 
non-enabled; it may become enabled in the future however,
after (at least) $t_1$ or $t_2$ is fired.

In the context of Petri net the acceleration may be defined syntactically (by modifying the guard of transition $t_{time}$ 
and the annotations of arcs from transition $t_{time}$ to place $s_C$)
and corresponds to replacing the following items in Definition \ref{def:translation}:
\begin{itemize}
    \item $\lambda{}(t_{time},s_C) = y'$, where $\forall i \in [1,n]$, $y'[i] \leftarrow y[i] + \delta$;
    \item $\lambda{}(t_{time}) \defeq \delta > 0$.
\end{itemize}
The computation of $\delta$ is possible thanks to the current localities of agents present 
in tokens in place $s_A$ and the values of clocks present as tokens in place $s_C$.

Note finally an interesting feature of the accelerated semantics: 
if we change the granularity of the time and multiply all the timing constant by some factor, the size of the state space of the original semantics is inflated accordingly.
On the contrary, the size (and structure) of the accelerated semantics remains the same.

\subsection{Abstracted dynamics} 

In order to capture the causality feature of such models, 
mixing time passings and transition/reset executions,
and to drop the purely timed aspects,
we shall consider the graph whose nodes are the projections of evolutions from the initial state on the set of transitions and resets. 
Said differently, if we have a word on the alphabet composed of $+\delta$ (time passing, with $\delta=1$ in the original, non-accelerated, semantics), $t_{i,j}$'s (transitions of agent $A_i$) and $r_i$'s (reset of agent $A_i)$ representing a possible evolution of the system up to some point,
by dropping all the $+\delta$'s we shall get its projection, and a node of the abstracted (from timing aspects) graph.
The (labelled) arcs between those nodes will be defined by the following rule: if $\alpha$ and $\alpha t$ (or $\alpha r$) are two nodes, there is an arc labelled $t$ (or $r$) between them.
This will define a (usually infinite) labelled tree, abstracted unfolding of the semantics (either original or accelerated) of the considered system.

The initial node (corresponding to the empty evolution) will be labelled by the projection $(\vec{l};v)$ of the initial state $(\vec{l},\vec{c};v)$.
This will automatically (recursively) determine the label of the other nodes: if $(\vec{l};v)$ is the label of some node and there is an arc labelled $t=(l_i,f,[a,b],l'_i)$ from it to another one, the latter will be labelled $(\vec{l'};f(v))$, where $\vec{l'}$ is $\vec{l}$ with $l_i$ replaced by $l'_i$; 
and if the arc is labelled $r_i$, the label of the destination node will be $(\vec{l'};v)$, where $\vec{l'}$ is $\vec{l}$ with $l_i$ replaced by $l_i^1$.

As an illustration consider the \model{} of Ex. \ref{ex:mapt}, where we neglect the values of the variable to simplify a bit the presentation. 
The initial fragment of the original and accelerated dynamics as well as the corresponding abstracted dynamics are represented in Fig.~\ref{fig:accelerated-space},
assuming initially the clocks are both equal to $0$, $A_1$ is in state $1$ and $A_2$ is in state $3$.

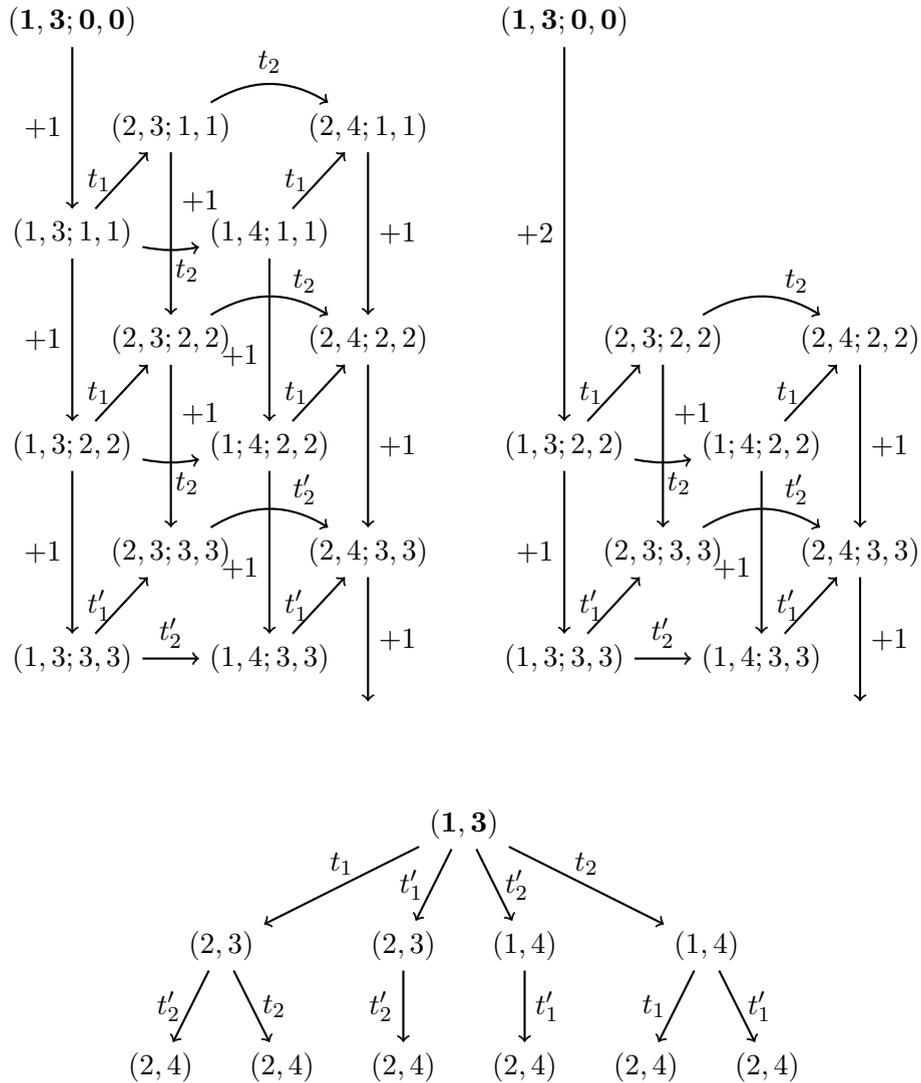
\begin{figure}[htbp]
\begin{center}

\begin{tikzpicture}[xscale=0.65, yscale=.7]
\tikzstyle{state}=[]
\tikzstyle{edge} = [->,thick]

\node [state,very thick] (q0) at (0,0) {${\bf (1,3;0,0)}$};
\node [state] (q1) at (0,-4) {$(1,3;1,1)$};
\node [state] (q2) at (0,-8) {$(1,3;2,2)$};
\node [state] (q3) at (0,-12) {$(1,3;3,3)$};
\draw [edge] (q0) to node[left, align = center]{$+1$} (q1);
\draw [edge] (q1) to node[left, align = center]{$+1$} (q2);
\draw [edge] (q2) to node[left, align = center]{$+1$} (q3);

\node [state] (q1') at (2,-2) {$(2,3;1,1)$};
\node [state] (q2') at (2,-6) {$(2,3;2,2)$};
\node [state] (q3') at (2,-10) {$(2,3;3,3)$};
\draw [edge] (q1') to node[right, pos=0.3]{$+1$} (q2');
\draw [edge] (q2') to node[right, pos=0.3]{$+1$} (q3');
\node [state] (q1'') at (4,-4) {$(1,4;1,1)$};
\node [state] (q2'') at (4,-8) {$(1;4;2,2)$};
\node [state] (q3'') at (4,-12) {$(1,4;3,3)$};
\node[](q4''')at(6,-13){};
\draw [edge] (q1'') to node[left, pos=0.6]{$+1$} (q2'');
\draw [edge] (q2'') to node[left, pos=0.6]{$+1$} (q3'');
\node [state] (q1''') at (6,-2) {$(2,4;1,1)$};
\node [state] (q2''') at (6,-6) {$(2,4;2,2)$};
\node [state] (q3''') at (6,-10) {$(2,4;3,3)$};
\draw [edge] (q1''') to node[right, align = center]{$+1$} (q2''');
\draw [edge] (q2''') to node[right, align = center]{$+1$} (q3''');
\draw [edge] (q3''') to node[right, align = center]{$+1$} (q4''');

\draw [edge] (q1) to node[left, align = center]{$t_1$} (q1');
\draw [edge] (q2) to node[left, align = center]{$t_1$} (q2');
\draw [edge] (q3) to node[left, align = center]{$t'_1$} (q3');

\draw [edge,bend right=10] (q1) to node[below, pos=0.8]{$t_2$} (q1'');
\draw [edge,bend right=10] (q2) to node[below, pos=0.8]{$t_2$} (q2'');
\draw [edge] (q3) to node[above, pos=0.5]{$t'_2$} (q3'');

\draw [edge] (q1'') to node[left, align = center]{$t_1$} (q1''');
\draw [edge] (q2'') to node[left, align = center]{$t_1$} (q2''');
\draw [edge] (q3'') to node[left, align = center]{$t'_1$} (q3''');

\draw [edge,bend left] (q1') to node[above, align = center]{$t_2$} (q1''');
\draw [edge,bend left] (q2') to node[above, pos=0.8]{$t_2$} (q2''');
\draw [edge,bend left] (q3') to node[above, pos=0.8]{$t'_2$} (q3''');

%\node[]at(6,-2){Original semantics (truncated)};
\end{tikzpicture}
\hspace{.5cm}
\begin{tikzpicture}[xscale=0.65, yscale=.7]
\tikzstyle{state}=[]
\tikzstyle{edge} = [->,thick]

\node [state,very thick] (q0) at (0,0) {${\bf (1,3;0,0)}$};
%\node [state] (q1) at (0,-4) {$(1,3;1,1)$};
\node [state] (q2) at (0,-8) {$(1,3;2,2)$};
\node [state] (q3) at (0,-12) {$(1,3;3,3)$};
\draw [edge] (q0) to node[left, align = center]{$+2$} (q2);
%\draw [edge] (q1) to node[left, align = center]{$+1$} (q2);
\draw [edge] (q2) to node[left, align = center]{$+1$} (q3);

%\node [state] (q1') at (2,-2) {$(2,3;1,1)$};
\node [state] (q2') at (2,-6) {$(2,3;2,2)$};
\node [state] (q3') at (2,-10) {$(2,3;3,3)$};
%\draw [edge] (q1') to node[right, pos=0.3]{$+1$} (q2');
\draw [edge] (q2') to node[right, pos=0.3]{$+1$} (q3');
%\node [state] (q1'') at (4,-4) {$(1,4;1,1)$};
\node [state] (q2'') at (4,-8) {$(1;4;2,2)$};
\node [state] (q3'') at (4,-12) {$(1,4;3,3)$};
\node[](q4''')at(6,-13){};
%\draw [edge] (q1'') to node[left, pos=0.6]{$+1$} (q2'');
\draw [edge] (q2'') to node[left, pos=0.6]{$+1$} (q3'');
%\node [state] (q1''') at (6,-2) {$(2,4;1,1)$};
\node [state] (q2''') at (6,-6) {$(2,4;2,2)$};
\node [state] (q3''') at (6,-10) {$(2,4;3,3)$};
%\draw [edge] (q1''') to node[right, align = center]{$+1$} (q2''');
\draw [edge] (q2''') to node[right, align = center]{$+1$} (q3''');
\draw [edge] (q3''') to node[right, align = center]{$+1$} (q4''');

%\draw [edge] (q1) to node[left, align = center]{$t_1$} (q1'');
\draw [edge] (q2) to node[left, align = center]{$t_1$} (q2');
\draw [edge] (q3) to node[left, align = center]{$t'_1$} (q3');

%\draw [edge,bend right=10] (q1) to node[below, pos=0.8]{$t_2$} (q1'');
\draw [edge,bend right=10] (q2) to node[below, pos=0.8]{$t_2$} (q2'');
\draw [edge] (q3) to node[above, pos=0.5]{$t'_2$} (q3'');

%\draw [edge] (q1'') to node[left, align = center]{$t_1$} (q1''');
\draw [edge] (q2'') to node[left, align = center]{$t_1$} (q2''');
\draw [edge] (q3'') to node[left, align = center]{$t'_1$} (q3''');

%\draw [edge,bend left] (q1') to node[above, align = center]{$t_2$} (q1''');
\draw [edge,bend left] (q2') to node[above, pos=0.8]{$t_2$} (q2''');
\draw [edge,bend left] (q3') to node[above, pos=0.8]{$t'_2$} (q3''');

%\node[]at(6,-2){Original semantics (truncated)};
\end{tikzpicture}\\[1cm]

\begin{tikzpicture}[scale=0.8]
\tikzstyle{edge} = [->,thick]
%\node[](t)at(0,-5){Abstracted transition system (truncated)};

\node  (q0) at (0,0) {${\bf (1,3)}$};
\node  (q1) at (-1,-2) {$(2,3)$};
\node  (q2) at (-4,-2) {$(2,3)$};
\node  (q3) at (4,-2) {$(1,4)$};
\node  (q4) at (1,-2) {$(1,4)$};
\node  (q5a) at (-1,-4){$(2,4)$};
\node  (q5b) at (-3,-4){$(2,4)$};
\node  (q5b') at (-5,-4){$(2,4)$};
\node  (q5c) at (3,-4){$(2,4)$};
\node  (q5c') at (5,-4){$(2,4)$};
\node  (q5d) at (1,-4){$(2,4)$};
\draw [edge] (q0) -- node[left,midway] {$t'_1$} (q1);
\draw [edge] (q0) -- node[above,midway] {$t_1$} (q2);
\draw [edge] (q0) -- node[above,midway] {$t_2$} (q3);
\draw [edge] (q0) -- node[right,midway] {$t'_2$} (q4);
\draw [edge] (q1) -- node[left,midway] {$t'_2$}  (q5a);
\draw [edge] (q2) -- node[right,midway] {$t_2$} (q5b);
\draw [edge] (q2) -- node[left,midway] {$t'_2$} (q5b');
\draw [edge] (q3) -- node[left,midway] {$t_1$} (q5c);
\draw [edge] (q3) -- node[right,midway] {$t'_1$} (q5c');
\draw [edge] (q4) -- node[right,midway] {$t'_1$} (q5d);

\end{tikzpicture}

\end{center}
\caption{\label{fig:accelerated-space}  
The initial fragments (without variable values) 
of the various dynamics for the \model{} from Ex~\ref{ex:mapt}.
Top left: the original dynamics.
Top right: the accelerated one.
Bottom: the abstracted dynamics.}
\end{figure}

\begin{proposition} \label{abstr.prop}
The original and accelerated semantics of a \model{} lead to the same abstracted dynamics. 
\end{proposition}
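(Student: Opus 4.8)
The plan is to reduce the statement to the equality of the two sets of realizable firing words and then to prove the two inclusions separately. By the definition of the abstracted dynamics, a node is the projection $w=u_1\cdots u_k$ of some evolution onto the alphabet of transitions $t_{i,j}$ and resets $r_i$ (all the $+\delta$ being erased), and its label $(\vec l;v)$ is computed recursively from $w$ alone: a transition only updates one locality and applies its function to $v$, a reset only returns one locality to $l_i^1$, and a time passing modifies neither $\vec l$ nor $v$. Hence the label attached to a given word is the \emph{same} function of that word in both semantics, and each tree is nothing but the prefix order on its set of realizable words. So it is enough to show that a firing word is realizable in the original semantics if and only if it is realizable in the accelerated one.

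For the inclusion ``accelerated $\subseteq$ original'' I would simply replay every accelerated evolution in the original one, replacing each jump $+\delta$ by $\delta$ consecutive unit steps $+1$. These produce the same clock vector and leave $\vec l$ and $v$ untouched, and each intermediate unit step is a legal time increase: since $\delta\le B\le B_i$, for every agent $A_i$ and every $0\le d<\delta$ we have $c_i+d<c_i+\delta\le c_i+B_i$, which remains strictly below the relevant upper bound ($E_i$ in a terminal locality, otherwise the largest $b$ of the outgoing transitions), \ie{} below the bound required by the time-increase guard. Firings are then enabled at identical clock values, so the projected firing word is unchanged and every accelerated word is an original one.

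The inclusion ``original $\subseteq$ accelerated'' is the hard part, and I would establish it by induction on the length of $w$, keeping as invariant that after a common prefix both runs reach the same $(\vec l;v)$, the clock vectors being allowed to differ (the accelerated ones being possibly larger). The difficulty, and the main obstacle of the whole proof, is that the accelerated semantics advances time to the ends of maximal action zones and could \emph{a priori} overshoot the interval $[a,b]$ of the transition $u_{k+1}$ that the original run fires next. This is exactly where Constraints~\ref{constr:liveness} are used: parts~3 and~4 guarantee that whenever a locality is entered (necessarily at a clock not exceeding the upper bound of the entering transition, hence, by part~3, not beyond the smallest upper bound of its outgoing transitions), every outgoing transition and, in a terminal locality, the reset is still or becomes enabled before the first upper bound or period that closes the current zone. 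Consequently a newly enabled transition always opens a non-dominated zone whose end still enables it, so, as already observed for the accelerated semantics, ``whatever the time jumps performed in the previous state, the first maximal zone is the same, so that we do not miss a possible firing''. Using this together with Proposition~\ref{prop:constraints1}, I would argue that $u_{k+1}$ is enabled at the end of some maximal zone reachable from the current accelerated state by jumps; jumping there and firing $u_{k+1}$ re-establishes the invariant and closes the induction, proving that the two sets of realizable words --- hence the two abstracted dynamics --- coincide.
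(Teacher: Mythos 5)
Your proposal is correct and follows the same overall route as the paper's proof: reduce the claim to the equality of the two sets of (untimed) firing words, noting that the labels of the abstracted tree are determined by the word alone; obtain the inclusion ``accelerated $\subseteq$ original'' by expanding each jump $+\delta$ into $\delta$ unit steps (your check that each intermediate step satisfies the time-increase guard because $\delta\le B\le B_i$ is exactly the paper's argument); and obtain the converse inclusion by induction, with Constraint~\ref{constr:liveness} doing the real work. The only substantive difference is how the hard direction is organised. The paper inducts on the length of the \emph{timed} evolution $\omega$ and carries the sharper invariant that the set of transitions/resets enabled after $\omega$ is \emph{included} in the set enabled after the matching accelerated $\omega'$; combined with the observation that the clocks are a function of the word and of $\Delta(\omega)$ only (so the two clock vectors differ by a uniform shift), this lets it dispatch each one-step extension --- unit time passing, reset, transition --- locally. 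You instead induct on the firing word with the weaker invariant ``same $(\vec l;v)$, clocks may drift'' (note the drift can go in either direction, not only towards larger accelerated clocks), and must therefore argue in one block, per firing, that the next event of the original word is catchable by a sequence of jumps; your justification via Constraints~\ref{constr:liveness}.3--4 (no overshoot of the outgoing upper bounds when a locality is entered) and via the fact that every interval $[a,b]$ contains the end of a maximal action zone is the right one and matches the ingredients the paper uses. To be fully airtight you should add the paper's separate treatment of resets: a reset requires the exact equality $c_i=E_i$, and one must observe that $E_i$ caps $B_i$ and acts as both an $a$ and a $b$, so an accelerated jump can never overshoot it and the same reset count (hence the same period index $k$) is reached in both runs. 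Both organisations prove the same thing; the paper's enabled-set invariant simply makes the bookkeeping more local, while yours concentrates it in the zone analysis.
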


\begin{proof}
We only have to show that the set of (untimed) projections of evolutions in the original semantics is the same as the ones in the accelerated one.

First, we may observe that each evolution in the accelerated semantics is also an evolution in the original one: a time passing of $\delta$ time units to reach a maximal action zone 
is the same as $\delta$ time passings of $1$ time unit;
indeed, by definition, $\delta\leq B$ and at the end in both cases we have $B-\delta=B-\delta\cdot 1 \geq 0$. 

It thus remains to show that, if $\word(\omega)$ is the projection of some evolution $\omega$ of the original semantics, it is also the projection of some evolution $\omega'$ of the accelerated one.
We shall proceed by induction on the length of $\omega$ and show more exactly that for each $\omega$ there is an accelerated evolution $\omega'$ such that $\word(\omega')=\word(\omega)$ and the set of enabled transitions/resets after $\omega$ is included in the one after $\omega'$.

The property is trivially satisfied initially, when $\omega=\omega'=\leer$,
but also if the initial enabled set is not maximal and we choose 
the accelerated strategy going to (any point realising) the first maximal enabled set through some shift $\delta$. 
Indeed, we know by definition that some shift $\delta$ always lead to the first maximal enabled set and nowhere else. 
Then, in this last case, by definition $\delta\leq B$ and the set of enabled transitions/resets increases.

We already observed that, if $\word(\omega)=\word(\omega')$, the locality and the variable are the same after $\omega$ and $\omega'$.
Let $\Delta(\omega)$ be the time elapsed during the evolution described by $\omega$.
We may observe that the clocks are determined by $\word(\omega)$ and $\Delta(\omega)$, independently on when the time passings exactly occurred: 
for any agent $A_i$, $c_i=\init_i+\Delta(\omega)
-E_i\cdot\#_{r_i}(\omega)$, where $\#_{r_i}(\omega)$ is the number of resets of $A_i$ in $\word(\omega)$.
We also have that we may not let more than $\min_i\{E_i\}$ time passings to occur in a row, since then we should have a reset occurring before.

We shall now assume that $\widetilde{\omega}$ extends $\omega$ by one event, that $\omega$ and $\omega'$ form an adequate pair, and that it is then possible to build an adequate accelerated evolution $\widetilde{\omega}'$.

If $\widetilde{\omega}=\omega(+1)$,
i.e., if $\widetilde{\omega}$ is obtained from $\omega$ by adding a time passing (of $1$ time unit), 
the projection of $\widetilde{\omega}$ is the same as the one of $\omega$, hence of $\omega'$ by the induction hypothesis. 
If the enabled set after $\widetilde{\omega}$ is still included in the one after $\omega'$, the latter still satisfies the induction hypothesis.
If the enabled set after $\widetilde{\omega}$ is no longer included in the one after $\omega'$, 
that means we reached one or more $a$'s which were not reached yet by $\omega'$, 
so that we may deduce that $\Delta(\omega')<\Delta(\widetilde{\omega})$. 
But then, going to (any point in) the next maximal action zone (with the aid of some aggregated time passing $\delta$) 
in the accelerated semantics, we shall reach those $a$'s (without trespassing $B$ since otherwise this would also occur for $\widetilde{\omega}$, forcing to first perform a transition or reset after $\omega$) 
and recover the induction hypothesis.

If $\widetilde{\omega}=\omega r_i$, for some agent $A_i$, 
we must have that $\init_i+\Delta(\widetilde{\omega})=\init_i+\Delta(\omega)=k\cdot E_i$ for some factor $k$, with $r_i$ belonging to the set of transitions/resets enabled after $\omega$. 
But an action zone enabling a reset 
is an interval including exactly one time unit, 
and is maximal. 
Hence after $\omega'$ we have the same action zone and 
$\init_i+\Delta(\omega')=k\cdot E_i$
(the same factor for $\omega'$ as for $\widetilde{\omega}$ since the time passings between resets are limited). 
We may thus also perform $r_i$ after $\omega'$, 
the state after $\omega' r_i$ is the same as after $\widetilde{\omega}$,
and the situation is the same as initially.

If $\widetilde{\omega}=\omega t$, 
for some transition $t$ of some agent $A_i$,
by the induction hypothesis $t$ may also occur after $\omega'$
and $\word(\widetilde{\omega})=\word(\omega' t)$. 
Any $t'$ enabled after $\omega$ in any $A_j$ for $j\neq i$ remains enabled after $\widetilde{\omega}$ as well as after $\omega' t$, by the induction hypothesis.
For agent $A_i$, from the third item of Constraint~\ref{constr:liveness}, 
no transition at the new location has already reached its enabling end point
in the original (after $\widetilde{\omega}$) and in the accelerated (after $\omega 't$) semantics.
If $\Delta(\widetilde{\omega})\leq\Delta(\omega 't)=\Delta(\omega')$, the clock $C_i$ of $A_i$ is not greater after $\widetilde{\omega}$ than after $\omega't$) (see the formula above yielding $c_i$) so that all the enabled transitions of $A_i$ after $\widetilde{\omega}$ are also enabled after $\widetilde{\omega}'t$, and the induction hypothesis remains valid.
On the contrary, if $\Delta(\widetilde{\omega})>\Delta(\omega't)$, it may happen that some $\widetilde{t}$ in $A_i$ is enabled after $\widetilde{\omega}$ but not after $\omega't$;
however, from $\omega't$ it is then possible to let time pass during $\Delta(\widetilde{\omega})-\Delta(\omega't)$, which leads to the same state as after $\widetilde{\omega}$;
it is then also possible to consider a maximal action zone after $\omega't$ which encompasses all the transitions enabled after $\widetilde{\omega}$, to reach it in the accelerated semantics, 
and the induction hypothesis remains valid.

\end{proof}

\section{Layers and strong and weak variables}
\label{layers}

When model checking a system, one usually has the choice between a depth-first and a width-first exploration of the state space.
For reachability properties (where one searches if some state satisfying a specific property may be reached), 
depth-first (directed and limited by the query) is usually considered more effective.
However, 
the majority of the non-determinism in systems featuring a high level of concurrency (such as \model{}s, and in particular \cav{} systems) leads to diamonds.
Indeed, if transitions on different agents are available at a state then they may occur in several possible orders, all of them converging most of the time to the same state (see the paragraph on persistence above).
In order to avoid exploring again and again the same states, a depth-first exploration needs to store 
all the states already visited up to now, 
which is usually impossible to do in case of large systems. 
For example, if states $s1$ and $s2$ share a common successor $s3$, the algorithm will compute successors of $s1$, then remove $s1$ from memory and continue with its successors, until reaching $s3$ and exploring all paths from $s3$, forgetting each time the nodes already visited. 
That way, when the algorithm has explored all paths from $s1$ and start exploring from $s2$, there is no memory of $s3$ having been explored already, and thus all paths starting from it will be explored again.

On the contrary, using width-first algorithms would guarantee avoiding that issue, 
because duplicate states obtained at a given depth can be removed. 
However, this would also imply exploring all reachable states at a given depth and forbid using heuristics to direct and limit the exploration.

An idea is then to try to combine both approaches.

\subsection{Layered state space} 
The state space of a \model{} shows an interesting characteristics: apart from having no cycles (see Prop.~\ref{DAG.prop}: the state space in our case is always a \acygra), 
its structure can often be divided in 
layers such that all states on the border of a layer share the same vectors of localities and clocks (and thus, the same set of enabled transitions) and are situated at the same time distance from the initial state.
The only difference concerns the value of the variable, due to the non-determinism and the concurrency inherent to this kind of models. 
Non-determinism means that an agent has the choice between several transitions at some location; concurrency means that at least two agents may perform transitions at some point.
In the first case, several paths may be followed by the agent to reach some point, leading to different values of the variable; 
in the second case, transitions of the two agents may be commuted, leading again to different values of the variable.

This is schematised by 
Fig.~\ref{fig:layered_space}, where one can see how the state space is divided in sub-spaces 
(each of them being a \acygra{} with a unique initial state) such that each final state of a sub-space 
is the initial state of another one.
The sub-spaces may intersect.

More formally, 
in a \acygra, we have a natural partial order:
$s_1 < s_2$ if there is a non-empty path from $s_1$ to $s_2$;
$s_1$ and $s_2$ are incomparable if there is no non-empty path between them.

A cut is a maximal subset of incomparable states.
A cut partitions the partially ordered space into three subsets: the states before 
the cut, the cut itself, and the states after the cut.

In the following, we shall denote by $\omega$ a (possibly empty) evolution leading from some state $s$ to some state $s'$, \ie, the sequence of transitions, resets and time passings labelling some path going from $s$ to $s'$ in the state space of the considered model. As usual, we shall also denote by $\Delta(\omega)$ the sum of the time passings along $\omega$, also called the time distance from $s$ to $s'$ (along $\omega$). 

\begin{definition}
In a \model{} (whose state space is a \acygra), a cut is said coherent if all its states have the same locality and clock vectors, and any two evolutions $\omega_1$, $\omega_2$ linking the initial state to states of the cut have the same time length: $\Delta(\omega_1)=\Delta(\omega_2)$. Coherent cuts may be used to define borders between layers.

Let $s$ be any state in a \model;
the states reachable from $s$ form a \acygra{} subspace, in which we may also define coherent cuts:
a coherent cone with apex $s$ is the set of states up to a coherent cut in this subspace (including $s$ and the cut). 
\end{definition}

\begin{proposition}
\hspace*{1cm}
\begin{itemize}
\item Coherent cuts do not cross, in the following sense.
Let $\mathcal C_1$ and $\mathcal C_2$ be two coherent cuts in a \model,
$s_1,s'_1\in\mathcal C_1$,
$s_2,s'_2\in\mathcal C_2$.
If there is an evolution $\omega$ from $s_1$ to $s_2$ and  $\omega'$ from  $s'_2$ to $s'_1$, then $\mathcal C_1=\mathcal C_2$ and $\omega=\leer=\omega'$. In particular, no two distinct coherent cuts may have a common state. 
\item The time distance between coherent cuts is constant, in the following sense.
Let ${\mathcal C}_1$ and ${\mathcal C}_2$ two different coherent cuts in a \model,
    $s_1$, $s'_1\in{\mathcal C}_1$, $s_2$, $s'_2\in{\mathcal C}_2$,
    with an evolution $\omega$ from $s_1$ to $s_2$ and $\omega'$ from  $s'_1$ to $s'_2$, then $\Delta(\omega)=\Delta(\omega')$.
\item If $s_1\in{\mathcal C}_1$ and there is a coherent cone with time height $\Delta$ (for any evolution $\omega$ from $s_1$ to the base of the cone, $\Delta(\omega)=\Delta$), then there is a coherent cut ${\mathcal C}_3$ separated from ${\mathcal C}_1$ by a time distance $\Delta$.
\end{itemize}
\end{proposition}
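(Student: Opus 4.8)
The plan is to exploit two structural facts established earlier. First, the clocks are determined by the untimed projection together with the elapsed time: for any agent $A_i$ one has $c_i = \init_i + \Delta(\omega) - E_i \cdot \#_{r_i}(\omega)$, where $\omega$ is any evolution from the initial state (and more generally, if $\omega$ links state $s$ to $s'$, the clocks at $s'$ are determined by those at $s$, by $\Delta(\omega)$ and by the reset counts along $\omega$). Second, the state space is a \acygra{} (Prop.~\ref{DAG.prop}), so along any non-empty evolution either time strictly passes or a transition/reset changes a locality or strictly increases the $X$-part of $V$; in particular every non-empty evolution contains at least one time passing or one locality change, and the \acygra{} partial order $<$ is well founded.

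For the first bullet (coherent cuts do not cross), I would argue as follows. Given $s_1,s_1'\in\mathcal C_1$, $s_2,s_2'\in\mathcal C_2$ with evolutions $\omega\colon s_1\to s_2$ and $\omega'\colon s_2'\to s_1'$, compose a witnessing path inside the \acygra{}: since $s_1,s_1'$ are incomparable (both in the cut $\mathcal C_1$) and $s_2,s_2'$ are incomparable, the existence of $\omega$ and $\omega'$ forces, by the cut-maximality and the partial order, that $s_1 \le s_2 \le s_2' $ cannot be strict on both ends without contradicting incomparability within a cut. Concretely, I would first show $s_1$ and $s_2$ must actually be comparable via $\omega$, then use that $s_1'$ reaches $s_1$ (through $s_2'$ and $s_2$) so $s_1' \le s_1$; by incomparability of $s_1,s_1'$ in $\mathcal C_1$ this forces $s_1=s_1'$ and the connecting evolution to be empty, hence $s_2=s_2'$ and then $\mathcal C_1 = \mathcal C_2$ since a cut is the maximal incomparable set through any of its points and both cuts now share a state with identical locality/clock data. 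The key technical point is to rule out a ``diagonal'' crossing: two cuts that each contain one point below and one point above a state of the other. Here coherence is essential — all states of $\mathcal C_1$ have the \emph{same} locality and clock vector, so if one state of $\mathcal C_1$ were strictly below a state of $\mathcal C_2$ and another strictly above it, the common locality/clock vector of $\mathcal C_1$ would have to be reached at two different time distances from that intermediate state, contradicting the ``same time length'' clause of coherence (combined with the clock formula, since along a cycle-free path returning to the same locality/clock vector the reset counts and elapsed time are rigidly linked).

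For the second bullet (constant time distance between cuts), take $\omega\colon s_1\to s_2$ and $\omega'\colon s_1'\to s_2'$. I would interpolate: since $\mathcal C_1$ is coherent, $s_1$ and $s_1'$ have the same locality and clock vectors, and similarly for $s_2,s_2'$; so it suffices to compare time distances of two evolutions with common endpoint-data. Using the clock formula at the endpoints, $\Delta(\omega)$ is determined modulo the reset counts $\#_{r_i}(\omega)$; to pin it down exactly I would use a hybrid evolution — e.g. concatenate a path from the initial state to $s_1$ (time length $\Delta_1$, well-defined by coherence of $\mathcal C_1$) with $\omega$, reaching $s_2$ at time $\Delta_1 + \Delta(\omega)$, and compare with $\Delta_1 + \Delta(\omega')$ reaching $s_2'$; by coherence of $\mathcal C_2$ these two total time lengths are equal, hence $\Delta(\omega) = \Delta(\omega')$. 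The main obstacle here is the existence of \emph{some} evolution from the initial state to $s_1$ and the fact that its time length is well-defined — the former holds because every state is reachable, the latter is exactly the coherence hypothesis on $\mathcal C_1$.

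For the third bullet, given $s_1\in\mathcal C_1$ and a coherent cone with apex $s_1$ and time height $\Delta$, I would define $\mathcal C_3$ as the union, over all $s\in\mathcal C_1$, of the bases of the translated coherent cones with apex $s$ and the same locality/clock data. I would check: (i) each such base is a coherent cut of the subspace below it, with time height $\Delta$ from its apex, because all states of $\mathcal C_1$ have identical locality/clock/enabled-transition data and the subspace rooted at a state depends only on that data together with the value of $V$ (the construction of the abstracted dynamics makes this precise) — so the cone with apex $s$ is ``the same'' as the one with apex $s_1$ up to the value of $V$; (ii) $\mathcal C_3$ is a maximal incomparable set in the whole \acygra{}; (iii) any evolution from the initial state to a point of $\mathcal C_3$ factors through $\mathcal C_1$ (by the first bullet, cuts do not cross, so $\mathcal C_1$ genuinely separates), giving total time length $\Delta_1 + \Delta$, independent of the choices. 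The delicate step is (i): justifying that the coherent-cone structure is invariant under replacing the apex by another state with the same locality and clock vector. I expect this to be the main obstacle of the whole proposition, and I would handle it by the observation already used for persistence and for the abstracted dynamics — the future behaviour of a state $(\vec l,\vec c,v)$ depends on $v$ only through the \acygra{}-isomorphism class of its reachable subspace, and the time structure (time distances, coherence of cuts) does not depend on $v$ at all.
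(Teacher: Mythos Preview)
Your treatment of the second and third bullets is essentially the paper's argument. For the second bullet you concatenate a path from the initial state to $s_1$ (resp.\ $s_1'$) with $\omega$ (resp.\ $\omega'$) and use the equal-time clause of coherence at both cuts; the paper does exactly this in two lines. For the third bullet you transport the cone from $s_1$ to every other $s\in\mathcal C_1$ using that the future evolution depends only on $(\vec l,\vec c)$ and not on $v$; the paper states the same observation more tersely.

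The first bullet, however, has a genuine gap. Your partial-order argument does not go through as written: from the hypotheses you only have $s_1\le s_2$ (via $\omega$) and $s_2'\le s_1'$ (via $\omega'$); there is no relation $s_2\le s_2'$, and no path ``from $s_1'$ to $s_1$ through $s_2'$ and $s_2$'' --- the directions are wrong and the intermediate comparabilities are not given. Two arbitrary maximal antichains in a \acygra{} \emph{can} cross; coherence is not a secondary technicality here but the whole point. The paper's argument is precisely the time-arithmetic you used for the second bullet, applied once more: pick paths $\widetilde\omega$ from the initial state to $s_1$ and $\widetilde\omega'$ from the initial state to $s_2'$; coherence of $\mathcal C_2$ gives $\Delta(\widetilde\omega)+\Delta(\omega)=\Delta(\widetilde\omega')$, coherence of $\mathcal C_1$ gives $\Delta(\widetilde\omega')+\Delta(\omega')=\Delta(\widetilde\omega)$, hence $\Delta(\omega)=-\Delta(\omega')$ and both vanish. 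It then remains to show that a zero-time evolution preserving the locality and clock vectors must be empty, which follows from the agents' locality \acygra{}s and $E_i>0$. You already have all the ingredients --- just reuse your bullet-2 method for bullet 1 instead of the order-theoretic detour.
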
 

\begin{proof}\hspace*{1cm} 
\begin{itemize}
    \item In the first case,
 if there is a path $\widetilde{\omega}$ from $s_0$ to $s_1$ and a path $\widetilde{\omega}'$ from $s_0$ to $s'_2$, we must have $\Delta(\widetilde{\omega})+\Delta(\omega)=\Delta(\widetilde{\omega}\omega)=\Delta(\widetilde{\omega}')$ and $\Delta(\widetilde{\omega}')+\Delta(\omega')=\Delta(\widetilde{\omega}'\omega')=\Delta(\widetilde{\omega})$, hence $\Delta(\omega)=-\Delta(\omega')$, which is only possible if $\Delta(\omega)=0=\Delta(\omega')$.\\
 Also, since $s_2$ and $s'_2$ have the same localities and clocks, there is an evolution $\omega'$ from $s'_2$ to some state $s''_1$ with the same localities and clocks as $s_1$,
 hence an evolution $\omega\omega'$ of time length $0$ from $s_1$ to $s''_1$, which reproduces the same localities and clocks.
 Since the localities of each agent $A_i$ form a \acygra{} and $E_i>0$, this is only possible if $\omega=\leer=\omega'$. \\
 In particular, if $\omega=\leer$, i.e., $s_1=s_2$, we also have that $s'_1=s'_2$ and $\mathcal C_1=\mathcal C_2$, and similarly if $\omega'=\leer$.
\item
In the next case, if $s_0$ is the initial state and there is a path $\widetilde{\omega}$ from $s_0$ to $s_1$ and a path $\widetilde{\omega}'$ from $s_0$ to $s'_1$,
we must have $\Delta(\widetilde{\omega})+\Delta(\omega)=\Delta(\widetilde{\omega}\omega)=\Delta(\widetilde{\omega}'\omega')=\Delta(\widetilde{\omega}')+\Delta(\omega')$, hence the property.
\item The last property results from the observation that, if $s'_1\in{\mathcal C}_1$, since $s_1$ and $s_2$ have the same localities and clocks, any evolution from $s_1$ to the base of the cone is also present from $s'_1$ and leads to a state with the same localities and clocks as the states on the base of the cone (but the variables may differ). 
And conversely, if a path leads from $s'_1$ to a state of ${\mathcal C}_3$, it has time length $\Delta$ and there is the same path from $s_1$ to some state on the base of the cone.
\end{itemize}
\end{proof}

For instance, if agent $A_i$ 
in a \model{} starts at $l_i^1$ with a null clock, after $E_i$ time units and before ($E_i+1$) time units, it shall necessarily pass through its reset
(it is possible that it performs other transitions before and/or after this reset without modifying its clock, but it is sure the agent will go through this reset before performing a new time passing). 
Hence, if each agent starts from its initial locality with a null clock, after $\lcm\{E_1,\ldots,E_n\}$ (\ie, the least common multiple of the various reset periods; 
in the following, we shall denote this value by $\lcmE$) 
time units, it is sure we shall be able to revisit the initial state, but possibly with various values of the variable, yielding the border of a layer. From this border the same sequences of transitions/resets/time-passings as initially will occur periodically (with a period of $\lcmE$), 
leading to new borders, with the initial localities and the null clocks.

The situation will be similar if $\forall A_i:\init_i=\init \mod E_i$ for some value $\init<\lcmE$. 
Indeed, for each agent $A_i$ and each $k$, after $E_i-\init_i+k\cdot E_i$ time passing we shall visit $l_i^1$ with a null clock, hence visit new borders after $(k+1)\cdot\lcmE-\init$ time passings. 

For other initial values of the clocks, it is not sure we shall be able to structure the state space in layers, but in either case, it may also happen there are other kinds of layers and borders.

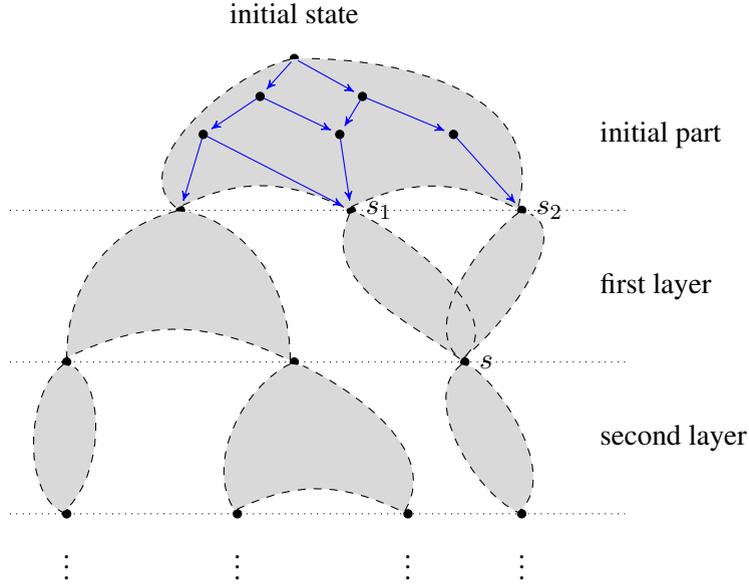
\begin{figure}[htb]
\begin{center}

\begin{tikzpicture}[>=stealth',shorten >=2pt,xscale=1.5,yscale=2]
\tikzstyle{line} = [-]
\tikzstyle{edge} = [->]
\tikzstyle{state}=[draw,circle,fill=black,scale=.3pt]

\node[state]  (s1) at (0,0) {};
\node (s0) at (0,.3) {initial state};

\node[state]  (s11) at (-1,-1) {};
\node[state, label=right:{$s_1$}]  (s12) at (0.5,-1) {};
\node[state, label=right:{$s_2$}]  (s13) at (2,-1) {};
\draw[dashed, fill=gray!30] (s1) [out=190, in=150] to  (s11) [out=20, in=160] to  (s12) [out=20, in=160] to  (s13) [out=70, in=0] to  (s1);
\draw[dashed] [line, dotted] (-2.5,-1) -- (3,-1);
\node (i0) at (2.6,-.5) [right] {initial part};

%diamants dans le premier sub-space
\node[state]  (a) at (-.3,-.25) {};
\node[state]  (b) at (0.6,-.25) {};
\node[state]  (c) at (-.8,-.5) {};
\node[state]  (d) at (0.4,-.5) {};
\node[state]  (e) at (1.4,-.5) {};
\draw[edge,blue] (s1) to (a);
\draw[edge,blue] (s1) to (b);
\draw[edge,blue] (a) to (c);
\draw[edge,blue] (a) to (d);
\draw[edge,blue] (b) to (d);
\draw[edge,blue] (b) to (e);
\draw[edge,blue] (c) to (s11);
\draw[edge,blue] (c) to (s12);
\draw[edge,blue] (d) to (s12);
\draw[edge,blue] (e) to (s13);
\node[state]  (s21) at (-2,-2) {};
\node[state]  (s22) at (0,-2) {};
\node[state, label=right:{$s$}]  (s23) at (1.5,-2) {};
\draw[dashed, fill=gray!30] (s11) [out=190, in=90] to (s21) [out=20, in=160] to  (s22) [out=90, in=-10] to  (s11);
\draw[draw=none, fill=gray!30] (s12) [out=240, in=150] to  (s23) [out=30, in=-20] to  (s12);
\draw[dashed, fill=gray!30] (s13) [out=210, in=150] to  (s23) [out=30, in=-20] to  (s13);
\draw[dashed] (s12) [out=240, in=150] to  (s23) [out=30, in=-20] to  (s12); 
\draw[dashed] [line, dotted] (-2.5,-2) -- (3,-2);
\node[dashed] (i1) at (2.6,-1.5) [right] {first layer};

\node[state]  (s31) at (-2,-3) {};
\node[state]  (s32) at (-0.5,-3) {};
\node[state]  (s33) at (1,-3) {};
\node[state]  (s34) at (2,-3) {};
\draw[dashed, fill=gray!30] (s21) [out=210, in=160] to  (s31) [out=30, in=-30] to  (s21);
\draw[dashed, fill=gray!30] (s22) [out=200, in=130] to  (s32) [out=20, in=160] to  (s33) [out=50, in=-20] to  (s22);
\draw[dashed, fill=gray!30] (s23) [out=210, in=160] to  (s34) [out=30, in=-30] to  (s23);
\draw[dashed] [line, dotted] (-2.5,-3) -- (3,-3);
\node[dashed] (i1) at (2.6,-2.5) [right] {second layer};

\drop{
\node[state]  (s41) at (-1,-4) {};
\node[state]  (s42) at (1,-4) {};
\node[state]  (s43) at (2,-4) {};
\draw[draw=none, fill=gray!30] (s31) [out=210, in=160] to  (s41) [out=30, in=-30] to  (s31);
\draw[dashed, fill=gray!30] (s32) [out=210, in=160] to  (s41) [out=30, in=-30] to  (s32);
\draw[dashed] (s31) [out=210, in=160] to  (s41) [out=30, in=-30] to  (s31); 
\draw[dashed, fill=gray!30] (s33) [out=210, in=160] to  (s42) [out=30, in=-30] to  (s33);
\draw[dashed, fill=gray!30] (s34) [out=210, in=160] to  (s43) [out=30, in=-30] to  (s34);
\draw[dashed] [line, dotted] (-2.5,-4) -- (3,-4);
\node[dashed] (i1) at (2.6,-3.5) [right] {third layer};
\node  (41) at (-1,-3.3) {$\vdots$};
\node  (42) at (1,-3.3) {$\vdots$};
\node  (43) at (2,-3.3) {$\vdots$};
} %HK (j'ai raccourci, car trop grand)

\node  (41) at (-2,-3.3) {$\vdots$};
\node  (42) at (-.5,-3.3) {$\vdots$};
\node  (43) at (1,-3.3) {$\vdots$};
\node  (43) at (2,-3.3) {$\vdots$};

\end{tikzpicture}
\end{center}
\caption{\label{fig:layered_space} 
General shape of a layered state space with a zoom on the initial part.
Identical states are merged together.
All states on the border of a layer share the same vectors of localities and clocks but have different values of $v$. 
Each sub-space surrounded by dashed lines 
is a \acygra{} having a unique initial state and one or more final states as shown in blue for the sub-space corresponding to the initial part.} 
\end{figure}

\begin{figure*}[tb]
\centering
\tikzstyle{location}=[draw,circle,inner sep=2pt]
\begin{tikzpicture}[>=latex',xscale=.8, yscale=1,every node/.style={scale=1}]
\node[location] at (0,0) (a) {$l_1^1$};
\node[location] at (3,0) (b) {$l_1^2$};
\node[location] at (6,0) (c){$l_1^3$};
\node at (8.5,0) {$E_1 = 10$};
\node at (-2,0) {\begin{tabular}{l} $A_1$ \end{tabular}};
\draw[->,rounded corners] (a) -- node[above,midway] {$[1,5]$} (b);
\draw[->,rounded corners] (b) -- node[above,midway] {$[6,8]$} (c);
\end{tikzpicture} \\[.5cm]
\begin{tikzpicture}[>=latex',xscale=.9, yscale=1,every node/.style={scale=1}]
\node[location] at (0,0) (e) {$l_2^1$};
\node[location] at (2,0) (f) {$l_2^2$};
\node[location] at (4,-1) (g) {$l_2^3$};
\node[location] at (6,0) (h) {$l_2^4$};
\node at (8,0) {$E_2 = 15$};
\node at (-1.5,0) {\begin{tabular}{l} $A_2$ \end{tabular}};
\draw[->,rounded corners] (e) -- node[above,midway] {$[0,4]$} (f);
\draw[->,rounded corners] (f) -- node[below,sloped,midway] {$[6,7]$} (g);
\draw[->,rounded corners] (g) -- node[below,sloped,midway] {$[9,9]$} (h);
\draw[->,rounded corners] (f) --  node[above,midway] {$[7,11]$}  (h);
\end{tikzpicture} \\[.5cm]
\includegraphics[scale=0.35]{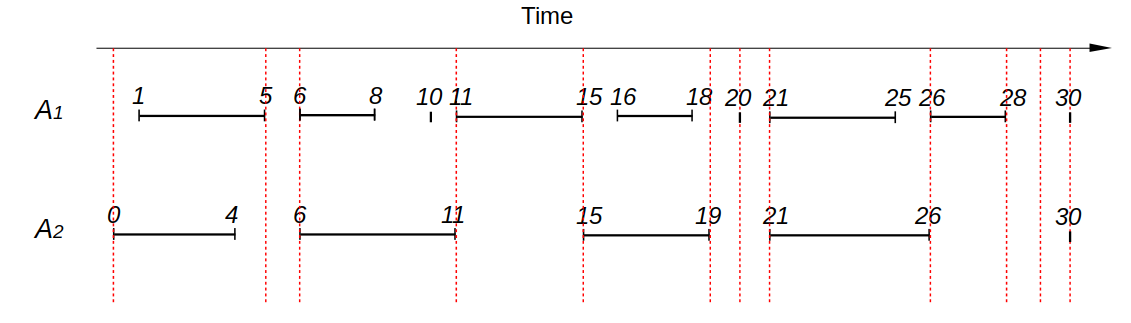}

\caption{\label{fig:intervals} Top: Example of a \model{} composed of agents $A_1$ and $A_2$ with clocks $C_1$ and $C_2$ initialized to $0$. 
Bottom: Time intervals where a transition or set of transitions may be performed. 
Red dotted lines indicate time units where a coherent cut may exist.}
\end{figure*}

Let us consider for instance the system illustrated on top of Fig.~\ref{fig:intervals}, where each agent starts with a null clock in its initial locality.
Agent $A_1$ is deterministic since there is a single transition originated from $l_1^1$ as well as from $l_1^2$, and agent $A_2$ is not since two transitions may occur while being in $l_2^2$. 
If we forget the value of the variable, 
the graph for $A_1$ is periodic with a period of $E_1=10$ and the graph for $A_2$ is periodic with a period of $E_2=15$. 
The whole system is therefore periodic with a period of $\lcmE=\lcm\{E_1,E_2\} = 30$. 
The sequence of intervals depicted in the bottom of the figure for $A_1$, represents the intervals where transitions
have to take place, with their time distance from the initial state (here these intervals are disjoint, but they could overlap as well). 
Note that if an initial clock $\init_i$ were to be strictly positive, the sequence of intervals for agent $i$ would be shifted to the left by $\init_i$ time units. 
For $A_2$, the intervals exhibited on the figure have a different interpretation, that will be explained below.

A border may not occur at a time $t$ measured from the beginning of the system if,
for some deterministic agent $A_i$ (the case for non-deterministic agents will be handled below), 
there is an interval $[a,b]$, shifted by some multiple of $E_i$, such that 
\begin{equation}\label{eq1}
a+k\cdot E_i-\init_i < t < b+k\cdot E_i-\init_i\end{equation} 
Indeed, in that case, at time $t$, $A_i$ may either be at the source or at the destination of the corresponding transition, without being able to impeach that, hence without being certain of the locality.

Hence, a deterministic agent may allow a border to occur at a time $t$ if one of the following cases occurs:
\begin{itemize}
    \item If $t$ is strictly between the various shifted intervals of $A_i$, we know immediately that when we reach this time we are at some specific location in $A_i$. 
    For instance at time $t=19$ in Fig.~\ref{fig:intervals}, we are sure $A_1$ is in location $l_1^3$.
    \item If $t$ is situated at the right of some (shifted) interval, the agent can be either in the source or in the destination localities.
    The first situation does not exist in every paths, as it is possible to leave the source before $t$, while the second situation exists in all paths.
    Therefore, the second situation is suitable for a coherent cut.
    This case happens in Fig.~\ref{fig:intervals}, for instance when the system reaches time $t=5$, $A_1$ may be either in $l_1^1$ or in $l_1^2$.
    \item If $t$ is situated at the left of some (shifted) interval, the agent can be either in the source or in the destination localities.
    This is symmetric to the previous case, and here it is the first situation that exists in all paths and is suitable for a coherent cut.
    This case happens in Fig.~\ref{fig:intervals}, for instance when the system reaches time $t=6$, $A_1$ may be either in $l_1^2$ or in $l_1^3$.
    \item If $t$ is situated on an interval of length $0$ (such as a reset, or transition with an interval where $a=b$).
    This corresponds to a union of the two previous cases, where two localities are possible.
    Here, both are suitable for a coherent cut.
    This case happens in Fig.~\ref{fig:intervals}, for instance when the system reaches time $t=20$, $A_1$ may be either in $l_1^3$ or in $l_1^1$.
    \item If $t$ is both at the right of some shifted interval and at the left of another one (meaning that they intersect on $t$), this comes back to a combination of the previous cases. 
    As such, a suitable situation for a coherent cut is to consider the system after performing the transition corresponding to the left interval and before performing the transition corresponding to the right interval.
    A particular occurrence of this case is shown in Fig.~\ref{fig:intervals} at time $15$, where $A_2$ 
    is at the right of an interval of length $0$ corresponding to its reset and at the left of the interval of the transition from $l_2^1$.
    In this situation, $A_2$ may either be in $l_2^4$, $l_2^1$ or $l_2^2$.
    The fact that one of the interval is of length $0$, is included in the general case.
\end{itemize}

For a non-deterministic agent, like $A_2$ in Fig.~\ref{fig:intervals}, the analysis is similar but 
slightly more complex; indeed, even between intervals it may be in several possible localities\footnote{of course not at the same time: for different histories.
}. 
For instance, at time $t=7$, $A_2$ may either be in $l_2^3$ after having  performed a transition at time $6$, or in $l_2^2$, and we may not force the system to wait for $A_2$ going in $l_2^3$ since it has the possibility to choose the other transition.
The idea is then to consider the localities of the considered non-deterministic agent $A_i$ which by themselves are singleton cuts in the \acygra{} of its localities, 
\ie, the localities which are visited in every complete iteration (from $l_i^1$ to $l_i^m$). 
For $A_2$ in Fig.~\ref{fig:intervals}, those localities are $l_2^1$, $l_2^2$ and $l_2^4$.
They form a sequence in $L_i$: let $P_i$ be this list 
and denote by $\nextt(l)$ the successor of $l$ in $P_i$.
Thus, between two localities $l$ and $\nextt(l)$ in $P_i$, either there is a unique transition 
enabled in some interval $[a,b]$ (as in the deterministic case above) or there are at least
two different paths with possibly several transitions enabled at some moment in the interval $[\tilde{a},\tilde{b}]$, where $\tilde{a}$ is
the smallest lower bound of all the outgoing transitions from $l$ and 
$\tilde{b}$ is the greatest upper bound of all the incoming transitions to $\nextt(l)$.
One may think about $[\tilde{a},\tilde{b}]$ as the enabling interval of some virtual transition from $l$ to $\nextt(l)$.
Then, exactly the same argument as above may be used to check if a state space of a \model{} admits layers,
and this amounts to a proof of: 

\begin{proposition}  \label{cutint.prop}
A \model{} admits a layered state space with borders at $t+\ell\cdot\lcmE$  
with $\ell \in \mathbb{N}$ if, 
for each agent $A_i$, 
for each $l \in P_i\setminus\{l_i^m\}$, 
and for $\tilde{a} \defeq \min\{a \mid (l,f,[a,b],l')\in \post{l}\}$, 
$\tilde{b} \defeq \max\{b \mid (l',f,[a,b],\nextt(l))\in \pre{\nextt(l)}\}$, 
no $k \in \mathbb{N}$ satisfies: 
$\tilde{a}+k\cdot E_i-\init_i < t < \tilde{b}+k\cdot E_i-\init_i$,
where
$\init_i$ is the initial value of clock $C_i$. \hfill $\Box$ 
\end{proposition}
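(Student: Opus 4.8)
The plan is to reduce Proposition~\ref{cutint.prop} to the informal analysis that precedes it, by showing that the stated arithmetic condition is exactly what guarantees the existence of a coherent cut at each time $t + \ell\cdot\lcmE$. First I would fix $\ell$ and set $t' \defeq t + \ell\cdot\lcmE$; since the whole system is periodic with period $\lcmE$ (each agent $A_i$ revisits $l_i^1$ with a null clock every $E_i$ time units, and $E_i \mid \lcmE$), it suffices to treat the case $\ell = 0$, the general case following by periodicity. The claim is then that the set of states reachable from the initial state by an evolution $\omega$ with $\Delta(\omega) = t$ and such that no further time passing is possible without leaving that time slice — equivalently, the states "at time distance exactly $t$" after we push all non-time events as far as the semantics allows — forms a coherent cut: all such states share the same locality vector $\vec{l}$ and clock vector $\vec{c}$, and by construction they are all at time distance $t$ from the initial state.

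The heart of the argument is a per-agent claim: under the stated hypothesis, at time $t$ (measured from the start, accounting for the shift by $\init_i$) agent $A_i$ is necessarily in one specific locality, the same along every history. I would prove this separately for the two cases distinguished before the proposition. For a deterministic agent, the localities are totally ordered $l_i^1, l_i^2, \dots, l_i^{m_i}$ with a unique transition between consecutive ones, enabled in $[a,b]$ (shifted by $k\cdot E_i - \init_i$); the failure of condition~\eqref{eq1} — which is precisely the negation of the inequality in the statement, here with $\tilde a = a$, $\tilde b = b$ — means $t$ is never strictly inside a shifted interval, so for each such interval $t$ lies at or before $a$, or at or after $b$. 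Combining these across all intervals of $A_i$'s cycle pins down a unique locality reachable in every path (the one after every transition whose interval lies weakly to the left of $t$, and before every transition whose interval lies weakly to the right of $t$); the "$=$" boundary cases and length-zero intervals are handled exactly as in the enumerated bullet list above (one of the two candidate localities is present in all histories and is the one we pick). For a non-deterministic agent, I would invoke the list $P_i$ of localities that are singleton cuts in the agent's locality \acygra{} — those visited in every complete iteration — and apply the identical argument to the "virtual transitions" between consecutive elements of $P_i$, whose enabling interval is $[\tilde a, \tilde b]$ with $\tilde a, \tilde b$ as defined in the statement; the hypothesis says $t$ is never strictly inside such a virtual interval, so again $A_i$'s position among the $P_i$-localities at time $t$ is uniquely determined. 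Between two consecutive $P_i$-localities the detailed choice of path and intermediate locality is irrelevant because we are not at a strictly-interior time, so no history can be "caught in the middle".

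Assembling these per-agent facts: at time $t$ every agent has a uniquely determined locality, giving a common $\vec l$; and since each clock $c_i$ is determined by $\Delta(\omega) = t$, by $\init_i$, by $E_i$ and by the number of resets of $A_i$ performed (which is itself determined by $t$ and $\init_i$, as in the clock formula $c_i = \init_i + \Delta(\omega) - E_i\cdot\#_{r_i}(\omega)$ from the proof of Proposition~\ref{abstr.prop}), the clock vector $\vec c$ is common too. Hence all these states form a coherent cut in the sense of the definition, and by the periodicity observation the same holds at every $t + \ell\cdot\lcmE$, which gives the layered structure. The main obstacle I anticipate is making the "uniquely determined locality" step fully rigorous in the non-deterministic case: one must argue carefully that the only way a history could fail to have a determined locality at time $t$ is to be strictly inside some (real or virtual) enabling interval, and that the $P_i$ decomposition captures all such intervals — i.e.\ that no "hidden" branching between two $P_i$-localities can straddle $t$ when $t \notin (\tilde a + kE_i - \init_i,\ \tilde b + kE_i - \init_i)$. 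This is intuitively clear from $\tilde a$ being the minimal outgoing lower bound and $\tilde b$ the maximal incoming upper bound, but spelling it out is the delicate part; everything else is bookkeeping with the clock formula and periodicity.
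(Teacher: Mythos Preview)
Your proposal is correct and takes essentially the same approach as the paper: the paper offers no separate formal proof (the proposition is closed with a $\Box$ immediately after the statement), explicitly declaring that the preceding informal case analysis ``amounts to a proof'' of it, and your plan is precisely a structured formalisation of that same analysis --- periodicity reduction to $\ell=0$, per-agent determination of the locality at time $t$ via the bulleted case analysis for deterministic agents, the $P_i$/virtual-transition reduction for non-deterministic agents, and assembly into a coherent cut using the clock formula. The obstacle you flag (rigour in the non-deterministic case) is real but is exactly the point the paper handles only informally as well.
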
 

A border detected that way will then be defined by the couple $((l_1,\cdots,l_n)$, $(c_1,\cdots,c_n))$ where, for agent $A_i$, $c_i = (t+\init_i) \mod E_i$ 
(or $E_i$ instead of $0$ if we reach the position of a reset but decide not to perform the latter) 
and $l_i$ is the locality periodically reached in all possibles paths at clock value $c_i$.
Locality $l_i$ is determined by the clock if we are not at the border of an interval, otherwise we have to know if the corresponding transition or reset has to be performed. 
In particular, when reaching intervals of length $0$, we have a choice between several localities (before or after performing the corresponding transition or reset).

\paragraph{Searching for $t$ more efficiently}
Proposition~\ref{cutint.prop} allows to search for the coherent cuts whose sets of localities and clocks reproduce every $\lcmE$ time units\footnote{There may also be non-periodic coherent cuts at the beginning of the state space, if some agents do not start at their initial locality with a null clock. Indeed, for those agents, it may happen that other localities are certainly visited before the first reset, which introduce other intervals $[\tilde{a},\tilde{b}]$ before that time. However, we shall not use those extra coherent cuts in our exploration and model checking tool.}. 
Hence, it is not necessary to consider times $t$ beyond $\lcmE$;
note however that it may happen that a coherent cut occurs at time $\lcmE$, but not at time $0$, if the corresponding localities occur "before" the initial ones at time $\lcmE$. 
Also, this proposition seems to imply we should consider all the shifted version of each interval $[\tilde{a},\tilde{b}]$, \ie, all integer values for $k$.
This is not true: for each $[\tilde{a},\tilde{b}]$ we only have to consider the greatest $k$ respecting the left constraint 
$\tilde{a}+k\cdot E_i-\init_i < t$, 
\ie, the greatest $k_a$ such that $k_a< \frac{t+\init_i-\tilde{a}}{E_i}$, which is given by the formula $k_a=\lceil\frac{t+\init_i-\tilde{a}}{E_i}-1\rceil$.
We then have to check if $t<\tilde{b}+k_a\cdot E_i-\init_i$ (in which case the considered $t$ does not define a coherent cut).

If we also want to avoid the extremities of the intervals $[\tilde{a},\tilde{b}]$, we get that no $k$ should lead to the constraint 
$\tilde{a}+k\cdot E_i-\init_i \leq t \leq \tilde{b}+k\cdot E_i-\init_i$.
This leads to the simpler formula $k_a=\lfloor\frac{t+\init_i-\tilde{a}}{E_i}\rfloor$, and to the check $t\leq\tilde{b}+k_a\cdot E_i-\init_i$.
Also, in this case the clock vector is enough to describe the coherent cut without any ambiguity.

\paragraph{Combination with the accelerated semantics} 

If we consider only the locality and clock vectors and we neglect the value of variable $V$ in the states, a coherent cut becomes a mandatory crossing point in the original dynamics of the system.
This will also be true in the accelerated semantics, but in order to preserve the periodic occurrences of these points we need to avoid letting time jumps go anywhere in the next maximal action zone: we need a deterministic rule, like the one we mentioned before, prescribing to go to the end of the zone. We shall adopt this rule in the following. 

Since in the accelerated semantics, time passings jump to (the end of) the next maximal action zone, intervals do not play the same role as in the original semantics and we may not rely on Property~\ref{cutint.prop} to find the coherent cuts. 
In particular, coherent cuts in the accelerated semantics are usually not ones in the original one.
This is due to the fact that, as time steps may be bigger than one unit in the accelerated dynamics, it may happen that a time passing overpasses the clock vector corresponding to some coherent cut $(\vec{l},\vec{c})$ present in the original dynamics.

However, we may relate 
coherent cuts in the accelerated semantics to the ones in the original one, which may be characterised by Property~\ref{cutint.prop}:
as we shall see in Proposition~\ref{acc-cuts.prop}, 
a state $(\vec{l},\vec{c}+\delta)$ reached after going over an original coherent cut $(\vec{l},\vec{c})$ is in fact a coherent cut of the accelerated dynamics.
This is illustrated in Figure~\ref{fig:accel-cut}.

One may observe that in both dynamics, all paths go to either $((1,2);(5,5))$ or $((2,2);(4,4))$.
In the original dynamics, the coherent cut at $((2,2);(5,5))$ is reached, and after a time passing the coherent cut at $((2,2);(6,6))$ is reached.
From $((2,2);(6,6))$, three actions are possible (two transitions and one time passing).
In the accelerated dynamics, it is still possible from $((1,2);(5,5))$ to reach $((2,2);(5,5))$, but not from $((2,2);(4,4))$ as the acceleration directly leads to $((2,2);(7,7))$.
From $((2,2);(5,5))$ in the accelerated semantics, the acceleration also leads to $((2,2);(7,7))$, since this state corresponds to the end of the first maximal action zone.
As such, in the accelerated semantics, $((2,2);(6,6))$ is not a coherent cut anymore since it is not reachable, but also $((2,2);(5,5))$ is no longer a coherent cut since there exist paths that go over it.
This illustrates that, in the accelerated dynamics, the locality vector corresponding to an original cut may be entered with different clock values, but from those states (here $((2,2);(4,4))$ and $((2,2);(5,5))$) the acceleration will always lead to the same vectors of localities and clocks (here $((2,2);(7,7))$), which is a coherent cut in the accelerated semantics.

It remains to show that this is not an accident but a general rule.

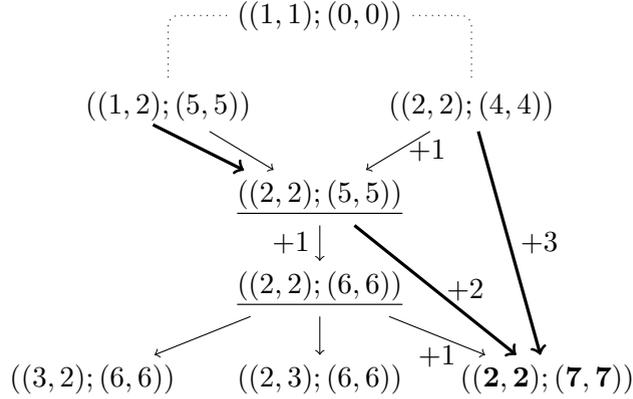
\begin{figure}
\begin{center}

\begin{tikzpicture}[yscale=1.2]
\tikzstyle{edge} = [->]
\tikzstyle{bedge} = [->,very thick] 
\tikzstyle{cut}=[draw,rectangle,rounded corners,inner sep=2pt]

\node  (q0) at (0,-1) {$((1,1);(0,0))$};
\node  (q1a) at (-2,-2) {$((1,2);(5,5))$};
\node  (q1b) at (2,-2) {$((2,2);(4,4))$};
\node  (q2) at (0,-3) {$\underline{ ((2,2);(5,5))}$};
\node  (q3) at (0,-4) {$\underline{((2,2);(6,6))}$};
\node  (q4a) at (-3,-5){$((3,2);(6,6))$};
\node  (q4b) at (0,-5){$((2,3);(6,6))$};
\node  (q4c) at (3,-5){$\bf ((2,2);(7,7))$};
\draw  [dotted,rounded corners](q0) -|  (q1a);
\draw  [dotted, rounded corners](q0) -|  (q1b);
\draw [edge] (q1a) -- (q2);
\draw [edge] (q1b) -- node[right, midway] {$+1$} (q2);
\draw [edge] (q2) -- node[left, midway] {$+1$} (q3);
\draw [edge] (q3) -- (q4a);
\draw [edge] (q3) -- (q4b);
\draw [edge] (q3) -- node[below, midway] {$+1$} (q4c);

\draw [bedge] (-2.2,-2.2) -- (-1,-2.7);
\draw [bedge] (q1b) -- node[right, midway] {$+3$} (q4c);
\draw [bedge] (q2) -- node[right, midway] {$+2$} (q4c);

\end{tikzpicture}

\end{center}    
\caption{\label{fig:accel-cut} A fragment of the original and accelerated dynamics with omitted values of $V$ for Example \ref{fig:intervals}. Vectors of localities $(l_1^i,l_2^j)$ are denoted by $(i,j)$.
The thick arcs correspond to the steps present in the accelerated dynamics while thin ones correspond to the steps present in the original one. Time passing arcs are labelled by the corresponding delay; transition arcs are unlabelled (the corresponding transition may be read in the change of localities). Coherent cuts in original dynamics are underlined and those in the accelerated one are bold. }
\end{figure}

\begin{proposition} \label{acc-cuts.prop}
For each (periodic, with the period $\lcmE$) coherent cut 
characterised by the vectors $(\vec{l},\vec{c})$ at time $t$ (measured from the beginning of the system) in the original semantics, 
there is a coherent cut in the accelerated semantics for the same vector of localities $\vec{l}$ and clock vector $\vec{c} + \delta$ at time $t+\delta$,
for some $\delta \in \mathbb{N}$. 
\label{prop:accel-cut}
\end{proposition}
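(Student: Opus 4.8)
The plan is to produce the accelerated cut explicitly and then check the three defining conditions of a coherent cut. Given the original coherent cut characterised by $(\vec{l},\vec{c})$ at time $t$, let $\delta\in\mathbb{N}$ be the length of the (deterministic) accelerated time jump leading from the configuration $(\vec{l},\vec{c})$ to the end of the first maximal action zone (with $\delta=0$ when ${\mathbf a}=0$ there, i.e.\ when no further maximal zone exists). The candidate is the set $\mathcal{C}'$ of accelerated states whose locality vector is $\vec{l}$ and whose clock vector is $\vec{c}+\delta$ (they may still differ in the value of $V$). Condition~(i), that all states of $\mathcal{C}'$ share the same locality and clock vectors, holds by construction, so it remains to establish (ii)~constant time distance and (iii)~that $\mathcal{C}'$ is a mandatory crossing, i.e.\ a maximal antichain met by every maximal accelerated evolution. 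The bridge to all of this is Proposition~\ref{abstr.prop}: the original and accelerated semantics have the same abstracted dynamics, and, as shown in its proof, every accelerated evolution, refined by splitting each jump $+\delta'$ into $\delta'$ unit time passings, is an original evolution with the same projection.

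For (ii) I would use the clock formula from the proof of Proposition~\ref{abstr.prop}, namely $c_i=\init_i+\Delta(\omega)-E_i\cdot\#_{r_i}(\omega)$. Because $(\vec{l},\vec{c})$ is a coherent cut, the reset counts $\#_{r_i}$ needed to reach it are fixed, and advancing the clocks from $\vec{c}$ to $\vec{c}+\delta$ crosses no reset, since $\delta\le B$ keeps every clock within its current cycle (no clock exceeds $E_i$, and a reset is triggered only at a terminal locality with clock exactly $E_i$). Hence each state of $\mathcal{C}'$ is reached with the same reset counts and, by the formula, at time $\Delta=t+\delta$.

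The heart of the argument is (iii), and the key is to pin down the entry clocks into locality $\vec{l}$ together with the determinism of the ``go to the end of the zone'' rule. First, since $(\vec{l},\vec{c})$ is crossed by every maximal original evolution, every agent $A_i$ is at $l_i$ with clock $c_i$ at time $t$, so it entered $l_i$ at a clock $\le c_i$; refining an accelerated evolution does not change the clock at which a transition fires, so in the accelerated dynamics every entry into $\vec{l}$ also occurs at a clock $\vec{c}''\le\vec{c}$ (componentwise). Second, I would invoke the characterisation behind Proposition~\ref{cutint.prop}: at a coherent cut no agent has yet been able to leave its locality, i.e.\ $c_i\le\min\{a\mid (l_i,f,[a,b],l')\in\post{l_i}\}$ for every $i$ (and $c_i\le E_i$ in a terminal locality), for otherwise an early firing would place $A_i$ elsewhere at time $t$ in some path. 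Consequently no outgoing transition or reset is enabled at any clock in $[\vec{c}'',\vec{c}]$, so from an entry state $(\vec{l},\vec{c}'')$ the only available move is a time jump; and since no interval upper bound $b$ (nor any $E_i$) preceded by a lower bound lies strictly below $\vec{c}+\delta$, the first maximal action zone reached from $(\vec{l},\vec{c}'')$ ends at the very same absolute clock vector $\vec{c}+\delta$ as the one reached from $(\vec{l},\vec{c})$. Thus every accelerated evolution entering $\vec{l}$ is forced through $(\vec{l},\vec{c}+\delta)$, which yields the mandatory crossing; incomparability of the states of $\mathcal{C}'$ (identical localities and clocks, hence time length $0$ between them) and therefore maximality of the antichain then follow from acyclicity (Proposition~\ref{DAG.prop}) exactly as in the unlabelled proposition stating that coherent cuts do not cross.

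The step I expect to be the main obstacle is the last one: proving that the end of the first maximal action zone is a fixed absolute clock vector $\vec{c}+\delta$, independent of the exact entry clock $\vec{c}''\le\vec{c}$. This is precisely where the coherent-cut characterisation ($t$ lying strictly outside every virtual interval $[\tilde a,\tilde b]$, Proposition~\ref{cutint.prop}) is indispensable: it guarantees that no zone boundary is passed between $\vec{c}''$ and $\vec{c}$, so that the accelerated jumps from all admissible entry points coalesce at the same state. Extra care is needed at interval extremities (when some $a_i=c_i$, yielding length-zero intervals at the cut), where I would rely on the strict version of the characterisation to ensure no transition is enabled at the cut, so that the forced move there is genuinely a time jump and the coalescence still holds.
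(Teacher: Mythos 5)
Your proof follows essentially the same route as the paper's: both take as candidate the state $(\vec{l},\vec{c}+\delta)$ at the end of the first maximal action zone reached at or after the original cut, and both reduce the problem to showing that every accelerated entry into $\vec{l}$ occurs at clocks at most $\vec{c}$ and that the deterministic end-of-zone jumps from all such entry points coalesce at that single state. Your explicit use of the inequality $c_i\le\min\{a\mid(l_i,f,[a,b],l')\in\post{l_i}\}$ to justify the coalescence is a sharpening of the paper's terser claim that there is a single exit time $t''$, and the extremity case you flag (some $a_i=c_i$, so that a transition is already enabled at the original cut and a path may leave $\vec{l}$ before the zone end) is indeed the one genuinely delicate point --- it is also the case the paper's own proof passes over most quickly, and restricting to the strict characterisation of Proposition~\ref{cutint.prop}, as you propose, is precisely the regime in which the coalescence argument is airtight.
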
 

\begin{proof}

Since, in the accelerated semantics as in the original one, the localities are determined by the sequence of transitions and resets that have been performed, from Proposition~\ref{abstr.prop} we know that the visited localities are the same in both semantics.
Moreover, each reachable state in the accelerated dynamics is also reachable in the original one and for each existing path between two states in the accelerated dynamics there is also at least one path in the original one.
As a coherent cut is a mandatory crossing point (when we neglect the values of the variable) in the original dynamics of the system, the only way to avoid it in the accelerated dynamics is to have a new arc from a state before the cut leading to a state after the cut
(for instance, in Figure~\ref{fig:accel-cut}, the original cut $(2,2)(5,5)$ is reachable in the accelerated semantics, but it may also be skipped by the arc from $(2,2)(4,4)$ to $(2,2)(7,7)$, hence it is not a cut in the accelerated semantics; 
the original cut $(2,2)(6,6)$ is not even reachable in the accelerated semantics, due to the arc from $(2,2)(5,5)$ to $(2,2)(7,7)$).
All transitions and resets present in the accelerated dynamics are also present in the original one, therefore only a time passing (jumping to the end of the next maximal action zone) may provide such a possibility. 
Hence, if we may prove that whenever a time passing in the accelerated dynamics goes from a state $s$ before a cut in the original dynamics to a state $s'$ after that cut, the state $s'$ belongs to a coherent cut in the accelerated dynamics, we are done. 

If an agent $A_i$ has a single location $l_i^1$, i.e., $m_i=1$, its resets do not change the location (only its clock goes from $E_i$ to $0$), 
hence we shall neglect it in the following definition of $t^-$ and $t^+$, considering its resets are spurious.
Let $t^- = \max\{t'\mid t'= (k\cdot E_i -\init_i)\leq t, k>0, i\in \{1,\ldots, n\}, m_i>1\}$ 
be the time of the last (non-spurious) reset not after $t$, 
and $t^+ = \min\{t'\mid t'= (k\cdot E_i-\init_i) \geq t, k>0, i\in \{1,\ldots, n\},m_i>1\}$ 
be the time of the first (non-spurious) reset not before $t$.

If the original coherent cut $(\vec{l},\vec{c})$ occurs before the first non-spurious reset then,
with the usual convention $\max(\emptyset)=0$ in $\mathbb{N}$,
$t^- = 0$. 
If a reset is available or was just performed at $t$, then $t^- = t = t^+$. 

Since we assumed that the transition graph of each agent is acyclic, 
if an agent leaves a locality, the same locality cannot be reached again before the next reset of this agent.
As a consequence, in the interval $[t^-,t^+]$, a vector of localities $\vec{l}$ once exited (\ie, performing a transition from a state with $\vec{l}$) cannot be reached again.
Therefore in both semantics, it is not possible to enter $\vec{l}$ strictly after $t$ in the interval $[t^-,t^+]$, 
nor to leave $\vec{l}$ strictly before $t$ in the interval $[t^-,t^+]$, 
since $\vec{l}$ must be reached at time $t$ in each original path (by definition of a coherent cut in the original semantics; 
note that other vectors of localities may also be reached at $t$, before or after $\vec{l}$). 

Hence, in the accelerated semantics, $\vec{l}$ will always be entered at some $t' \leq t$ and may only be leaved at some $t''\geq t$. 
There may be several values for $t'$, depending on the path followed to reach this locality (for instance, in Figure~\ref{fig:accel-cut}, there are two ways to enter $(2,2)$: $(2,2)(4,4)$ and $(2,2)(5,5)$).
On the contrary, there is single value $t''$, corresponding to the end of the first maximal action zone starting at or after $t$, and there is one since otherwise that would mean there is no way to reach $t$ and get out of $\vec{l}$.
This  yields the unique way to get out of the locality vector $\vec{l}$, hence a coherent cut of the accelerated semantics, adding $\delta=t''-t$ to each clock since we did not performed a reset meanwhile.
\drop{
If $t' < t$, since $\vec{l}$ cannot be exited before $t$, only a time passing of $d$ time units is possible, such that $t' + d \geq t$.
If $t' = t$, the coherent cut is either reached with $\delta = 0$, or only a time passing of $d > 0$ time units is possible that leads to the coherent cut.}
For instance, in the example of Figure~\ref{fig:accel-cut}, if $t=5$ there are two possible paths, either $t'=4$ and $d=3$, or $t'=5$ and $d=2$, leading in both cases to the coherent cut $(2,2)(7,7)$ of the accelerated semantics. 
Notice a curious feature: 
in the accelerated semantics for the same example, we reach the coherent cut $((2,2);(5,5))$ of the original semantics, but it is no longer a coherent cut since there exists now a path that does not reach it, because of the added arc labelled $+3$.

From the choice of the jump points in the accelerated semantics, $\delta$ will be the same for each re-occurrence of the considered coherent cut, at $t+k\cdot\lcmE$.
%RD il reste à déterminer si on a la propriété inverse: à tout cut cohérent en accéléré en t, en correspondent un ou plusieurs en original en des temps t'\leq t

\drop{
Therefore only a time passing is possible from $((2,2);(5,5))$, and we are in the case were the coherent cut is reached after this next time passing. 

The time passing of $d$ time units acts as $d$ time passings of $1$ time unit.
This time passing jumps over a state that belongs to a coherent cut in the original semantics, and all paths from that cut lead at $t'+d$ to a set of states which all possess the same vector of localities and clocks, as the accelerated semantics always jump to the end of the first maximal action zone.
By definition this set of states if a coherent cut as they all have the same vector of localities $\vec{l}$ and clocks $\vec{c}+\delta$, with $\delta = t' + d - t$. }
\end{proof}

\paragraph{Exploring layered state space}

The function $next\_border(\state)$, depicted in Algorithm~\ref{algo:next_border} takes a state $\state=(\vec{l},\vec{c},v) $ and computes, through a width first exploration, the set of successors up to the next border. 
It applies to both original and accelerated semantics and requires to define a non empty set of periodic cuts $\mathit{Cuts}$ (in the form $(\vec{l},\vec{c})$, i.e., without the variable, obtained from an application of Prop.~\ref{cutint.prop}) that are coherent in the original dynamics.

To do so we introduce the function $next\_state(s)$, which returns the set of all successors of state $s$ (depending on the chosen semantics), 
and the function $is\_cut(pre\_s,s)$, which is true if the state $s$, successor of state $pre\_s$ is part of a cut defined by $\mathit{Cuts}$.
Formally, $is\_cut(pre\_s,s)$ depends on the chosen semantics.
In the original semantics, $is\_cut(pre\_s,s)$ is $\true$ if $s=(\vec{l},\vec{c},v) $ and $ (\vec{l},\vec{c})\in \mathit{Cuts}$.
In the accelerated semantics, $is\_cut(pre\_s,s)$ is $\true$ if one of the following occurs:
\begin{itemize}
    \item $s=(\vec{l},\vec{c},v) $, $ (\vec{l},\vec{c})\in \mathit{Cuts}$ and at least one transition or reset allows to leave $s$, 
    which means that the coherent cut is the same in both semantics;
    \item $pre\_s=(\vec{l},\vec{c},v)$, $s=(\vec{l},\vec{c^+},v) $, $ (\vec{l},\vec{c})\in \mathit{Cuts}$ and $s$ is the only successor of $pre\_s$ with $\vec{c} < \vec{c^+}$, which means that the original cut has also been reached in accelerated semantics but is no longer a coherent cut;
    \item $pre\_s = (\vec{l},\vec{c^-},v)$, $s = (\vec{l},\vec{c^+},v)$ and $(\vec{l},\vec{c})\in \mathit{Cuts}$ with $\vec{c^-} < \vec{c} \leq \vec{c^+}$, which means that the accelerated time increase went over the original cut.
\end{itemize}

The algorithm is described in python : $list.add(e)$ adds element $e$ in the queue $list$ (only if $e \notin list$) , 
while $list.pop()$ removes the first element (it is a first in/first out behaviour), $border$ and $exploring$ are initially empty and the loop condition is true as long as $exploring$ is nonempty.

\begin{algorithm}
\caption{$next\_border(\state)$}
\label{algo:next_border}
\begin{algorithmic}
\STATE{$border[]$} \COMMENT{Set of states to be returned}
\STATE{$exploring[]$} \COMMENT{Queue of states to explore}
\STATE{$exploring.add(\state$)}
\WHILE{$exploring$}
	\STATE{$pre\_s \leftarrow exploring.pop()$} 
	\STATE{$successors \leftarrow next\_state(pre\_s)$} 
	\FORALL{$s \in successors$}
		\IF{$is\_cut(pre\_s,s)$} 
			\STATE{$border.add(s)$} \COMMENT{States of the cut are added to border}
		\ELSE
			\STATE{$exploring.add(s)$} \COMMENT{Other states are added to exploring}
		\ENDIF
	\ENDFOR
\ENDWHILE
\RETURN $border$
\end{algorithmic}
\end{algorithm}

This can be used iteratively in a depth-first exploration to jump from a state to one of its successors belonging to the next border.
During this exploration, an additional function may be used to check if a state satisfies some condition. 
Such a use of layers allows to reduce the number of explored paths by detecting 
diamonds caused by the order of transitions of concurrent agents.

\subsection{Exploration using strong and weak variables}

The approach presented in the previous section does not deal with diamonds spreading on a time distance 
longer than the one between two adjacent borders.
For example, it may still happen that two different states $s_1$ and $s_2$ belonging to the same border have a common successor $s$ in the future, as illustrated in Fig.~\ref{fig:layered_space}. 
To cope with this issue, it is more interesting to perform the width-first exploration that computes successors at the next border for the set $\{s_1, s_2\}$ instead than taking them separately.
In general, it is not obvious to know or guess 
which states should be kept together in the computation of the next border. 
Indeed, one should be able to determine when sets of states should be split in sub-sets and when they should be kept together.
To perform such a clustering, it may be interesting to exploit the properties of target applications, 
such as \cavs.

A possible solution is to assume $V \defeq V_w \times V_s$, where $V_w$ (weak) is a less important part of $V$ and $V_s$  (strong) a more important one, such that states differing in the valuation of $V_s$ are unlikely to have a common successor, while this is not the case for $V_w$.
Symmetrically, states with the same 
valuation of $V_s$ are more likely to have a common successor.
This may give us a criterion to cluster states and jump from a set of states to the set of their successors at the next border.
The choice of $V_s$ and $V_w$ is of course system-dependent and should be defined by an expert, or with the help of a simulation tool.
As an example, elements that can be assigned a new value independently of their previous one might be considered as weak, while elements whose value changes depend on their present value (for instance the position of a moving object) might be considered as strong.

Function $clustered\_next\_border(\state\_set)$ is then a variant of $next\_border()$,
taking a set of states and producing a set of clusters, \ie, sets of states having 
identical values of variables in $V_s$.
It is used in a similar way as $next\_border()$ to explore in a depth-first manner
the layered state space, the only difference being that it jumps from a cluster belonging to some border to a cluster belonging to the next one,
based on the choice of $V_s$. 

\drop{
The exploration taking into account the structure of $V$ is defined below.
Formally, we implement a function $strong\_equal(s1,s2)$ checking equality on $V_s$ that returns $\true$ if the valuation of $V_s$ is identical for states $s1$ and $s2$.
Then we modify $next\_border()$ as defined in Algorithm~\ref{algo:next_border2} so it now takes a set of states $\state\_set$ as an input and a set of sets of states as an output.

\begin{algorithm}
\caption{$next\_border(\state\_set)$}
\label{algo:next_border2}
\begin{algorithmic}
\STATE{$border[][]$}
\STATE{$exploring[]$}
\STATE{$state\_added$}
\FORALL{$\state \in \state\_set$}
    \STATE{$exploring.add(\state$)}
\ENDFOR
\WHILE{$exploring$}
	\STATE{$pre\_s \leftarrow exploring.pop()$} 
	\STATE{$successors \leftarrow next\_state(pre\_s)$} 
	\FORALL{$s \in successors$}
		\IF{$is\_cut(pre\_s,s)$}
		    \STATE{$state\_added \leftarrow False$}
		    \FORALL{$set \in border$}
                \IF{$strong\_equal(set[0],s)$}
                    \STATE{$set.add(s)$} 
                    \STATE{$state\_added \leftarrow \true$}
                \ENDIF
            \ENDFOR
		    \IF{$\lnot state\_added$}
			    \STATE{$border.add([s])$}
			\ENDIF
		\ELSE
			\STATE{$exploring.add(s)$}
		\ENDIF
	\ENDFOR
\ENDWHILE
\RETURN $border$
\end{algorithmic}
\end{algorithm}
}

Note that if $V_s = \emptyset$, such an exploration is equivalent to a classical width-first one, since states at a border are always kept is the same sub-set.
With such an algorithm, for a bounded layered state space of a \model, 
one can perform an "on-the-fly" depth-first exploration since there is no need to memorize explored states.
This may be used to efficiently search for specific reachable states, and may be sped up by the use of heuristics that choose which sets of states to explore first.

\drop{%JA J'ai droppé pour le moment, mais il ne faut pas le perdre 
\section{Constraints and networks of timed automata} 

Without additional constraints on a \gmodel, it may happen 
that some transitions are useless (they may never be executed) or 
that the system deadlocks, locally or globally (we reach a state from which no state change may never occur in some agent or in any agent). 
The following result characterizes when this is not possible.

\begin{proposition}
\label{prop:constraints1}
In a \model{} system each transition and each reset may be fired iff
\begin{enumerate}
    \item In the initial state, $\init_i\leq E_i$ for each agent $A_i$.
    \item The initial locality $l_i$ and clock $C_i$ of each agent $A_i$ are such that $\init_i\leq \max\{b|(l_i,f,[a,b],l')\in \post{l_i}\}$.
    \item For each agent $A_i$, if $t=(l,f,[a,b],l_i^{m_i})\in T_i$ then $b\leq E_i$.
    \item For each agent $A_i$, if $t=(l,f,[a,b],l')\in T_i$ with $l'\neq l_i^{m_i}$, then  $b\leq\max\{b'|(l',f',[a',b'],l'')\in \post{l'}\}$.
\end{enumerate}
\end{proposition}

\begin{proof}\comm{TO REFINE} 
If, in the initial state, $\init_i>E_i$ for some agent $A_i$, it will never be possible to reset the latter, nor to let time pass.
For similar reasons, if the initial locality $l_i$ and clock $C_i$ of $A_i$ are such that $\init_i>\max\{b|(l_i,f,[a,b],l')\in \post{l_i}\}$, it will never be possible for $A_i$ to leave $l_i$, nor to let time pass.
If $t=(l,f,[a,b],l_i^{m_i})\in T_i$ with $b>E_i$ for some agent $A_i$, if enabled $t$ could possibly lead to the final locality of $A_i$ with $C_i>E_i$, from which state $A_i$ will never be reset and time is blocked.
For similar reasons, if $t=(l,f,[a,b],l')\in T_i$ for some agent $A_i$ with $l'\neq l_i^{m_i}$ and $b>\max\{b'|(l',f',[a',b'],l'')\in \post{l'}\}$, if enabled its execution may block $A_i$ in locality $l'$ and freeze time. 

The other way round, if the listed properties are satisfied,
we may observe that, in the initial state as well as in each reachable one $s=(\vec{l},\vec{c}, v)$, we have the invariant
\begin{equation} \forall i: (l_i=l_i^{m_i}\impl c_i\leq E_i)\land(l_i\neq l_i^{m_i}\impl c_i\leq \max\{b'|(l',f',[a',b'],l'')\in \post{l'}\}).\label{inv}
\end{equation}
As a consequence, since $E_i>0$ for each agent $A_i$ and the graph of the system is acyclic,
time may always pass, possibly after firing some transitions and resets;
indeed, no deadlock is possible and it is not possible to indefinitely perform transition firings and/or resets.
(Note that, from the fourth constraint, by a recursive argument we also have $b\leq E_i$ for each agent $A_i$ and transition $(l,f,[a,b],l')\in T_i$).

Next, for any agent $A_i$, from the initial state, it is always possible to reach its terminal state $l_i^{m_i}$ and perform a reset; indeed, we may choose any transition in $T_i$ originated from the initial state with a maximal $b$: from the hypotheses, possibly after some time passings (if $a$ is lower than the current value of the clock $C_i$) we shall be able to fire it; then we may resume the process from the new reached locality of $L_i$ and clock value of $C_i$, until we reach $l_i^{m_i}$; then, possibly after some more time passing again, we shall be able to reset $A_i$.

After a reset of $A_i$, its current locality is $l_i^1$, $C_i$ has value $0$ and it will always be possible to fire
 any transition $t=(l_i^1,f,[a,b],l')\in T_i$, possibly after some time passings. After that, from the acyclicity and the fourth constraint, by a recursive argument, it will always be possible to fire any other transition in $T_i$, possibly again after some time passings.
\end{proof}

The proof of this proposition also exhibits why it was interesting to assume $E_i>0$ for each agent $A_i$ in a \gmodel{}. Indeed, if we want to allow some $E_i=0$, then we need $\init_i=0$, as well as each interval $[a,b]=[0,0]$ in transitions of $T_i$. Then time may not pass, and we need also $E_j=0$, $\init_j=0$ and $[a,b]=[0,0]$ for all the other agents $A_j$. In this case, we get a timeless system, with a strong Zeno phenomenon (not only there are infinite sequences of transition firings during a finite time interval, but all infinite or finite firing sequences take no time at all), which is not adequate for a real time theory.

An interesting additional consequence of the constraints explained in Proposition~\ref{prop:constraints1} is that \gmodel{}s may then be considered as a subclass of timed automata  
extended with functions~\cite{?}, like the ones supported by \uppaal{}.
 
\begin{proposition} 
\gmodel{}s satisfying the constraints of Proposition~\ref{prop:constraints1} are a subclass of timed automata networks extended with functions.
\end{proposition}

\begin{proof}\comm{TO REFINE}
Sets $F$ and $\val$ 
can be defined the same way than in \gmodel{}s.
For each agent $A_i = (L_i,C_i,T_i,E_i)$, $C_i$ is a clock of the network and there will be an automaton $A_i$ composed of the localities in $L_i$ and implemented as follows.
\begin{itemize}
    \item$ \forall j \in [1,{m_i}-1]$, locality $l_i^j$ is associated with an invariant $c_i \leq B$ where $B$ is the higher upper bound of the intervals of all the transitions going out of $l_i^j$; 
    \item locality $l_i^{m_i}$ is associated with an invariant $c_i \leq E_i$;
    \item $\forall (l,f,[a,b],l') \in T_i$, there is a transition from $l$ to $l'$ which triggers function $f$ and is associated with a guard $a\leq c_i \leq b$; 
    \item there is a transition from $l_i^{m_i}$ to $l_i^1$ associated with a guard $c_i \geq E_i$ and which resets clock $C_i$.
\end{itemize}
It is then easy to verify that the semantics of the network of timed automata thus constructed is equivalent to the one of the given \gmodel{} system.
\end{proof}

Interestingly, the equivalence is not fully satisfied if some of the mentioned constraints are not fulfilled.
Indeed, if one of the first two constraints is not satisfied, the initial state does not satisfies the state invariant of the automata network, hence is not licit in the latter formalism.
And if one of the last two constraints is not satisfied, the corresponding transition will be blocked by the semantics of the automata since the state invariant is not preserved, while for the \gmodel{} model, the blocking problem will occur after the firing. 

One may observe that, while this is not essential for our following developments, it is possible to complete the invariant (\ref{inv}) by adding that 
\begin{equation}
\forall i: (l\in L_i\land l\neq l_i^1)\impl \min\{a|(l',f,[a,b],l)\in \pre{l}\}\leq c_i.\label{inv2}
\end{equation}
\drop{for each agent $A_i$ and each locality $l\in L_i$, there is an invariant property on the possible values of the clock $C_i$ when the agent is in this locality:
\begin{itemize}
    \item if $l\in L_i\setminus\{l_i^1,l_i^{m_i}\}$,
    $\min\{a|(l',f,[a,b],l)\in \pre{l}\}\leq c_i\leq \max\{b|(l,f,[a,b],l')\in \post{l}\}$;
    \item if $l=l_i^1$,
    $0\leq c_i\leq \max\{b|(l_i^1,f,[a,b],l')\in \post{l_i^1}\}$;
    \item if $l=l_i^{m_i}$,
    $\min\{a|(l',f,[a,b],l_i^{m_i})\in \pre{l_i^{m_i}}\}\leq c_i\leq E_i$.
\end{itemize}
}
However, for that, it is necessary that these constraints are also satisfied for the initial state, hence for the value $\init_i$ of clock $C_i$. 
}

\section{Dynamic exploration of a \model{}}
\label{sec:explo}

This section is dedicated to exploration algorithms of finite prefixes of \model{}s:
states that do not have successors in the considered prefix will be called \emph{final}.
The algorithms are denoted with the \ctl{} temporal logic syntax.
Since this temporal logic is meant to explore infinite paths, 
we shall consider that each final state has a self loop.

Our algorithms have two main characteristics: they operate "on-the-fly", which means that 
they do not store the entire visited state space (but only a cut of it), and they can be tuned with heuristics defining a priority on paths to be explored, that might significantly speed up the computation time if the searched states exist.
To do so we rely on the algorithm $clustered\_next\_border()$ mentioned 
in Section~\ref{layers}.
Since they do not store all the states that have been explored, we chose
not to return traces of execution, unlike what is usually proposed by standard temporal logic model checking tools.

We formalise in the following 
algorithms for the basic \ctl{} properties $EF p$ and $EG p$, respectively meaning \textit{a reachable state satisfies $p$} and \textit{there exists a path where $p$ is always true}. 
Any property for which we have an algorithm may be negated, so that we can also express $AF p$ and $AG p$, respectively equivalent to $\neg(EG \neg p)$ and $\neg(EF \neg p)$.

The algorithm for $EF p$ consists, starting from a stack containing the initial state, in taking the first element $s$ of the stack, returning it if $p$ is true on $s$, and otherwise adding the result of function $clustered\_next\_border(s)$ to the stack.
The algorithm continues recursively until reaching $p$ or there is no more states to explore in the considered finite prefix.
Additionally, we return $true$ if $p$ is satisfied by a state  
between two borders, \ie, during an application of $clustered\_next\_border()$. 

The algorithm for $EG p$ works in a similar way, but the state $s$ is returned if $p$ is true on $s$ and if $s$ is final, and $clustered\_next\_border(s)$ is added to the stack only if $p$ is true on $s$.
Additionally, states where $p$ is not true are dropped when 
explored in $clustered\_next\_border()$. 
That way, only states where $p$ is true are explored.

We may also define algorithms for nested \ctl{} queries built with binary logical operators. We shall for example consider 
two of them: $EF (p \land EF q)$, meaning that \textit{a reachable state satisfies $p$ and from that state a reachable state satisfies $q$}, and $EF (p \land EG q)$, meaning that \textit{a reachable state satisfies $p$ and from that state there exist a path where $q$ is always true}.
One may notice that the "leads to" operator ($\fleche$) used in the state of the art tool \uppaal{} follows the equivalence : 
$p \fleche q <=> AG (\neg p \lor AF q) <=> \neg EF (p \land EG \neg q)$.
This operator is therefore expressible in our framework.
Although only these two queries are given here, any kind of nested \ctl{} query can be implemented.

Those nested queries are implemented using a marking function (\ie, a Boolean indicator).
$EF (p \land EF q)$ is implemented as follows.
Whenever $p$ is true on a state, the state is marked.
Whenever a state is marked, all its successors are marked.
Starting from a stack containing the initial state, the first element $s$ of the stack is returned if $q$ is true on $s$ and $s$ is marked.
Otherwise, the result of $clustered\_next\_border(s)$ is added to the stack.
The same marking process is performed between two borders, \ie, in $clustered\_next\_border()$. 
We continue recursively until a state validates the property or there is no more state to explore.
As for $EF (p \land EG q)$, states are marked whenever both $p$ and $q$ are true or the state is a successor of a marked state and $q$ is true.
If a marked final state is reached, it validates the property and is returned.
Again, the same marking process is performed in $clustered\_next\_border()$. 

\section{Experiments}
In this section we illustrate the performances of our exploration algorithms.
To do so, we use \model{}s representing systems of autonomous communicating vehicles, for which both Constraints  
\ref{constr:liveness} and \ref{constr:acyclic} are satisfied.
The first constraint 
allows to use the acceleration, which heavily reduces the size of the state space as well as the number of diamonds.
The second constraint 
ensures that the state space is a \acygra{}.
As a consequence of the latter, 
the state space is infinite, because of the $X$ part of $V$.
In the following case studies, the longitudinal positions of the vehicles on the road will play the role of this part.
The road we observe is technically infinite, but as we are interested only in the analysis of a portion of it, we can bound the exploration to a fixed value of $X$.
The system thus converges towards a bound that, once reached, is considered as a final state.

In the following, we first compare the exploration time obtained with or without acceleration. 
Then, we discuss the advantages and drawbacks of using various types of layer-based explorations.
In the third part of this section, we provide some heuristics, and experiment them in order to (hopefully) observe the gain that can be achieved with them. 
Finally, we compare this method with the framework \verifcar{} \cite{verifcar}, which uses \uppaal, 
and we provide a verification method for the analysis of such  systems that is more efficient than the one proposed in \cite{verifcar}. 

Three models used in~\cite{verifcar}, featuring various 
state space sizes, have been implemented as \model{}s. 
Those models represent systems of autonomous vehicles circulating on a portion of highway where each vehicle communicates with the other ones to make decisions about its behaviour. 
These experiments have been performed by implementing the models with the free high level Petri net tool \zinc{}, 
using its library to implement our exploration algorithms.

\subsection{Efficiency of the accelerated dynamics.} 

A width first exploration of the state space on each of the three models have been performed using both the original and the accelerated semantics.
Table~\ref{comp_accel} provide for each model the number of states in its state space along with their full exploration times (FET) in both semantics and in \uppaal{}.
As expected, the accelerated semantics reduced exploration time, 
therefore, it has been used in all the subsequent experiments.

\begin{table}[htb]
\centering
\arraycolsep=5pt
$
\begin{array}{|l|c|c|c|}
\hline
&\mbox{Model 1}   &\mbox{Model 2}   &\mbox{Model 3}\\
\hline
\hline
%\mbox{Semantics}    &\multicolumn{3}{|c|}{\mbox{Full exploration time (s)} }\\
%\hline
\mbox{FET original semantics (s) }  &14.5   &144   &574\\
\hline
\mbox{FET accelerated semantics (s) }  &10.8   &52.1   &420\\
\hline
\mbox{FET \uppaal{} (s) } &5     &36   &379\\
\hline
\mbox{Size of the state space} &7751   &52732   &285944 \\
\hline
\end{array}
$
\vspace{.2cm}
\caption{\label{comp_accel}}
\end{table}

It is interesting to mention that the main improvement of the accelerated semantics, compared to the original one, is to explore only one state of each (maximal) action zone.
As such, the more a system features transitions with wide non-deterministic time intervals, the greater is the time gain provided by the accelerated semantics.
Here, the non-deterministic time intervals present in Model 1 and Model 3 are quite short, such that the number of paths that are ignored in the accelerated dynamics is not very important.
On the other hand, Model 2 features a transition with a wider non-deterministic time interval, explaining why the difference between the two semantics is more pronounced for this model.
We can thus expect 
the accelerated semantics to be even more useful when using models similar to the one depicted in Figure~\ref{fig:intervals}.

\subsection{Efficiency of the layer-based exploration.}

Here, we compare, for several exploration algorithms, 
the full exploration time and the reachability time of the first occurrence of a final state. 
They are explored in width first, depth first without layers and depth first with layers (with and without the use of strong/weak variables). The size of the list $\mathit{Cuts}$ was 1 for the models 1 and 3, and 5 for the second one. 
Table~\ref{comp_algo} shows the results.

\begin{table}[htb]
\centering
\resizebox{\columnwidth}{!}{
\arraycolsep=5pt
$
\begin{array}{|l|c|c|c|c|c|c|c|}
\hline
%\multirow{ 2}{*}{\mbox{Exploration algorithm}}   
\mbox{Exploration algorithm}    
&\multicolumn{3}{|c|}{\mbox{Full exploration time (s)} }	&   &\multicolumn{3}{c|}{\mbox{First occurrence of a final state (s)}}\\
\cline{2-4}\cline{6-8}
&\mbox{Model 1}   &\mbox{Model 2}   &\mbox{Model 3}   & &\mbox{Model 1}   &\mbox{Model 2}   &\mbox{Model 3}\\
\cline{1-4}\cline{6-8}
\mbox{Width first}  &10.8   &52.1   &420  &  &10.7   &52   &419.9\\
\hline
\hline
\mbox{Depth first without layers}  &\infty   &\infty   &\infty  &  &3.3   &4.6   &3.9\\
\cline{1-4}\cline{6-8}
\mbox{Depth first layered ($V_w = \emptyset$)}  &11   &\infty   &\infty  &  &4.5   &6.6   &4.1\\
\cline{1-4}\cline{6-8}
\mbox{Depth first layered (small $V_w$)}  &11   &250   &2015  &  &4.5   &7.4   &6\\
\cline{1-4}\cline{6-8}
\mbox{Depth first layered (large $V_w$)}  &11   &71    &667  &  &4.5   &14.4   &7.2\\
\hline
\end{array}
$
}
\vspace{.2cm}
\caption{\label{comp_algo} Comparison of full exploration time and time to reach the first occurrence of a final state state for exploration algorithms in width first, depth first with and without layers and with or without the use of weak variables. $\infty$ means that the exploration was stopped after $50$ hours of computation without a result.}
\end{table}

One can see that the width first algorithm has the best full exploration time in any case,
but the time before reaching any final state is close to the full exploration one, which makes it the worst technique in this case.
On the other hand, the standard depth first algorithm is the fastest for reaching a final state, but it does not fully explore the state space even after $50$ hours of computation.

Results for Model 1 show that as long as the layer based approach is used, the full exploration time is very close to that of the width first algorithm.
This indicates that there is almost no diamonds covering several layers, 
meaning that different states belonging to the border of a layer almost never share a common successor.
Because of that, the use of weak variables has no effect.
Although this case is rather simple, it clearly highlights the advantage of layer-based exploration: with almost no increase in full exploration time, it is able to reach a final state much faster.

Model 2 and Model 3 
have much more complex state spaces and,
in these cases, the layer-based algorithm that does not rely on weak variables to aggregate states is not able to explore the full state space even after $50$ hours of computation.
On the contrary, 
using even a small number but well chosen 
weak variables ($6$ out of $39$), it is possible to fully explore the state space. In both cases, exploration is about five times longer than the exploration time of width first algorithm.
When using a large number of weak variables ($30$ out of $39$), 
the exploration is much shorter (about $1.5$ times the time of width first algorithm).
One can note however that the larger $V_w$, the longer it takes to reach a final state.
Indeed, as states are aggregated layer by layer, a too large $V_w$ would result in an exploration similar to a width first one, where all states are kept together and final states are only reached at the end of the computation.
With the weak variables chosen, the time to reach a final state remains however reasonable. 

In the next experiments, the depth first algorithms always use layers and a fixed non-empty $V_w$.

\subsection{Heuristics}

Exploration algorithms based on layers allow the use of heuristics.
These heuristics guide the exploration, choosing among all the unexplored states the one that will most likely lead to a state that satisfies the checked property.
The heuristics we use consists in associating a weight to each state.
When a new state is discovered, it is placed in a list  ordered by weight of states to explore.
The list of states to explore is sorted either by ascending or descending weight, depending on the property to verify.
The weight is a prediction of the distance between the current state and a state satisfying the property. The next state to be explored is the last in the list, \ie, having the highest (respectively lowest) weight.

Therefore, a property may be associated with a heuristics that takes a state as an input and returns a weight as an output.
Below is a list of heuristics that we used for experiment purposes together with the property they are associated to:
\begin{enumerate}
    \item $\mathit{distance\_vh_1\_vh_2}$: returns the longitudinal position of vehicle $vh_1$ minus that of vehicle $vh_2$.
    It may be used with property $EF\ arrival\_vh_1\_before\_vh_2$ and weights sorted in ascending order, where $\mathit{arrival\_vh_1\_before\_vh_2}$ is true in a state if vehicle $vh_1$ reaches the end of the road portion before vehicle $vh_2$ does.
    The idea behind is to check in priority states where $vh_1$ is the most ahead of $vh_2$.
    \item $\mathit{estimated\_travel\_time\_vh}$: returns the time traveled since the initial state plus the estimated time to reach the end of the road portion, assuming the current speed is maintained. 
    It may be used with weights sorted in ascending order and property $EF\ travel\_time\_vh\_sup\_n$, where $\mathit{travel\_time\_vh} \geq n$ is true in a state if $vh$ has reached the end of the road portion within $n$ time units.
    The idea is to check in priority states where $vh$ is predicted to reach the end of the road with the shortest time. 
    \item $\mathit{time\_to\_overtake\_vh_1\_vh_2}$: is the time before both vehicles arrive at the same longitudinal position if they keep their current speed.
    It may be used with weights sorted in descending order and property $EF\ ttc\_vh_1\_vh_2 \leq n$, where $\mathit{ttc\_vh_1\_vh_2}$ is the value of the time to collision indicator between $vh_1$ and $vh_2$ (\ie, the delay before there is a collision between the two vehicles if they keep their current speed), and $n$ is a time to collision value.
    The idea is to check in priority states where one of the vehicles is getting closer to the other one with the higher speed.
\end{enumerate}

These heuristics have been used on Model $3$, with results given in Table \ref{comp_heuri}.
The scenario in Model 3 considers three vehicles positioned as depicted in Fig.~\ref{fig:scenario} on a two lane road portion that is 500 m long, with one additional junction lane.
Initially, vehicle A is on the right lane at position 0 m with a speed of 30 m/s, vehicle B is on the left lane at position 30 m with a speed of 15 m/s and vehicle C is on the junction lane at position 40 m with a speed of 20 m/s.
They all aim at being on the right lane at the end of the road portion.
 
\begin{figure}[ht]
\centering
\includegraphics[scale=0.4]{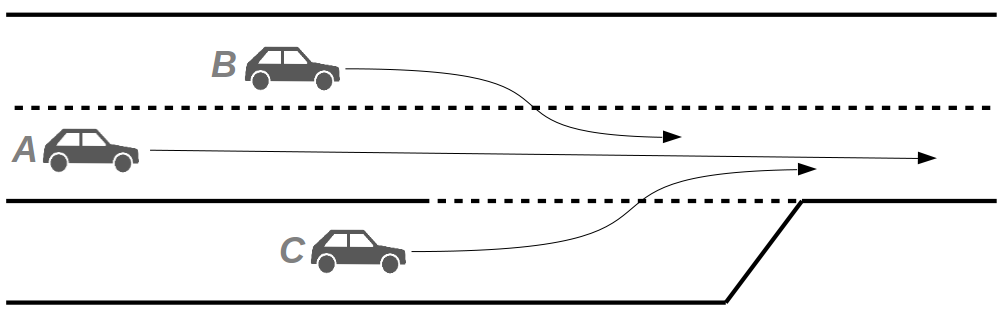}
\caption{\label{fig:scenario} Initial positions and possible trajectories of autonomous vehicles for the scenario in Model $3$.}
\end{figure}

\begin{table}[htb]
\centering
\resizebox{\columnwidth}{!}{
\arraycolsep=5pt
$
\begin{array}{|l|c|c|c|c|}
\hline
\mbox{Exploration algorithm}    &\mbox{$EF\ arrival\_B\_before\_A$}   &\mbox{$EF\ travel\_time\_A \geq 15.9$}   &\mbox{$EF\ ttc\_A\_C \leq 1.14$}   &\mbox{$EF\ ttc\_A\_B \leq 0$}\\
\hline
\mbox{Width first}  &416  &427  &292    &95\\
\hline
\mbox{Depth first without heuristics}  &234\mbox{---}357  &167\mbox{---}340  &247\mbox{---}547    &277\mbox{---}483\\
\hline
\mbox{Depth first with heuristics}  &131  &149  &103 &13\\
\hline
\end{array}
$
}
\vspace{.2cm}
\caption{\label{comp_heuri} Comparison of reachability time for exploration algorithms in width first and depth first with and without heuristics.
As depth first without heuristics is non deterministic, the two values correspond to the fastest and the slowest runs obtained for each query (five runs where performed each time).
}
\end{table}

The first two queries can only be true in a final state (the deepest layer).
As such, the reachability time with the width first algorithm is close to the full exploration time with the same algorithm.
In general, the width first reachablity time depends on the depth of the first state that satisfies the property.
One can observe that for the fourth query, the state is actually reached at a lower depth, which is reflected by the reachability time. 

As the depth first algorithm without heuristics randomly chooses which paths to explore first, the reachability time varies.
The number of states in the whole state space that satisfies the property thus impacts the mean reachability time with this algorithm, \ie, when there is more possibility to verify the property, the average time needed is shorter.
As we do not want to rely on luck, this is not satisfying.

On the other hand, depth first algorithm with heuristics explores states in a given order (depending on their weights) and therefore the reachablity time is always the same.
The heuristics we used could of course be modified and improved, but they are enough to show a significant decrease of 
the reachability time.
Even for the fourth query, where the width first is faster than the depth first algorithm, the heuristics allows to quickly identify the state that satisfies the property.

\subsection{Comparison with \verifcar{}}

We will now compare the reachability time obtained with \uppaal{} with the ones obtained with the 
depth first exploration algorithms with heuristics, on Model $3$.
We observed that \uppaal{} first constructs the state space in about $106$ s, then is able to answer almost instantly if a searched state exists.
It can therefore answer several queries after constructing the state space, unlike
our heuristics-based  dynamic exploration algorithms, which have to explore the state space from scratch for each query.
Yet, most of the states we aimed for can be reached easily, and the computation took only about $4$ seconds.
Queries depicted in Table \ref{comp_heuri} are those where states were harder to reach.
Compared to the ones we obtained in~\cite{verifcar}, 
these results indicate that, when a reachability property is verified, our algorithms have the same kind of execution time than the ones observed with \uppaal{}. 
On the other hand, if the reachability property is not true, they are slower than \uppaal{}, which depending on the kind of query takes between $34$ and $370$ seconds, which equals the full exploration time with this tool for Model 3.
As mentioned previously, the full state space exploration time with depth first algorithms on this Model, is in our case, of $667$ seconds.
This is not a surprise since \uppaal{} is a mature tool using many efficient abstractions.

However, \uppaal{} is restricted in terms of expressivity, at least in two ways interesting for us.
First, it is not possible to directly check bounds of a given numerical indicator, and such bounds should be obtained by dichotomy, requiring several runs for each indicator, such as proposed in the methodology of~\cite{verifcar}.
Second, it is limited to a subset of CTL (accepting mainly non nested queries).
Our algorithms do not have such restrictions.

Indeed it is possible to do a full exploration of the state space while keeping, for each state, the lower and higher values 
reached on the paths leading to the state, 
for a given set of indicators.
That way, all the information needed to analyse the behaviour of the system, can be obtained after only one full exploration. 
This is performed as a standard width-first exploration (storing states in a file) with the difference that each state is associated to a set of pairs $(min,max)$, being the (temporary) bounds of the considered indicators.
Each time a state is explored, the value for each indicator is computed, and it overwrites $min$ if it is smaller, and $max$ if it is greater.
That way, each state $s$ contains, for each indicator, the smallest and highest values that exist on the paths from the initial state to $s$.
As several paths can lead to $s$, we will consider that $s$ reached from path $P1$ and $s$ reached from path $P2$ are equivalent only if the set of their indicators are also equivalent.
Therefore, some diamonds might be detected (\ie, two identical states coming from different paths) but not merged together in order to keep information about their respective paths. 
That way it is possible to have several versions of the same state, but with different indicator values. 
If one is interested in the reachability of states (for instance, if an indicator is equal to some value), 
this can easily be done in the same way, by adding Boolean variables to the set of indicators.
At the end of the exploration, we get this way the set of all final states, together with all the information that has been carried on their respective paths.
It therefore contains all the information needed to analyse finely the system.
For the case of Model 3, getting the arrival order together with the bounds for travel time and worst time-to-collision takes $708$ seconds.
In comparison, the time needed by \uppaal{} to obtain the same information with the dichotomy procedure is $3553$ seconds.

Also, the \acygra{} shape of the state space allows us to implement any kind of \ctl{} queries.
For the experiments, we used a query of the kind $EF (p \land EG q)$, which is the negation of the "leads to" operator $p \fleche \neg q$ 
(the only nested operator available in \uppaal, in addition to deadlock tests)
and two of the kind $EF (p \land EF q)$, which cannot be expressed in \uppaal{}. 

In \cite{verifcar}, $arrival\_C\_before\_A \fleche arrival\_B\_before\_A)$ was used and reached a state invaliding the property in $110$ seconds.
Its negation can be expressed here as 
$EF (arrival\_C\_before\_A \land EG \neg arrival\_B\_before\_A)$
and our algorithm finds the state validating the property in about $10$ seconds.
The properties $EF (ttc\_A\_B \leq 1 \land EF arrival\_A\_before\_C$ and $EF (ttc\_A\_B \leq 1 \land EF arrival\_A\_before\_B)$, 
that cannot be checked in \uppaal{}, can be expressed here.
The first one expresses the possibility for vehicle A to arrive ahead of vehicle C after a dangerous situation has occurred, involving a time to collision of less than 1 second. The second is similar for vehicle A and B in the same conditions.
The first query is false and needs to explore the whole state space to give an answer (in $680$ seconds), while the second one is true and finds a state satisfying the property in about $10$ seconds.

Finally, it is worth mentioning that discretisation is needed for \uppaal, and therefore approximations may be mandatory in some cases, leading to a loss in precision and realism.
In addition to a better expressivity, the model checking process presented in this paper also ensures that no approximation is needed, hence a higher level of realism is achieved.

\section{Conclusion}

We introduced \gmodel{}s, multi-agent timed models where each agent is associated to a regular timed schema 
upon which all possibles actions of the agent rely.
The formalism allows to easily model systems featuring a high level of concurrency between actions, where actions are not temporally deterministic, such as the \cav{}s.
We have then formalised \model{}s ({\em Multi-Agent with timed Periodic Tasks}), by soundly constraining \gmodel{} ones.
\model{}s allows for an accelerated semantics 
which is an abstraction that greatly reduces the size of the state space by reducing as much as possible the number of time passings in the system.
We also presented how to extract a layered structure out of a \model{}, that allows to detect diamonds while exploring the system depth first.
We provided a translation from \gmodel{} to high level Petri nets, which allowed us to 
implement a dedicated checking environment for this formalism with the (free) academic tool \zinc{}.
Algorithms implemented in such environments explore state spaces dynamically and can be used together with heuristics that allow to reduce 
the computation time needed to reach some states in the model.
Finally, experiments highlighted the efficiency of our abstractions, and a comparison of model checking \cav{}s systems with the framework \verifcar{} has been performed.
Although our checking environment does not return traces of executions and is not better for full exploration times than the state of art tool \uppaal{} used in \verifcar{}, it has a better expressivity both on the model, since we can compute with non-integer 
numbers, and on the queries since nested \ctl{} ones can be checked. 
The heuristics performed well for reachability problems and we also provided an exploration algorithm that allows to gather all information needed to analyse the system in one run, which greatly decreased the time needed to gather the same amount of information when using \verifcar{}.
Although we developed this method with the case study of autonomous vehicles in mind, this formalism and all the abstractions and algorithms presented in this paper can be easily applied to any multi-agent real time systems where agents adopt a cyclic behaviour,
such as mobile robots completing cyclically tasks according to their own objectives, flying drone squadrons, etc.

\bibliographystyle{plain}
\bibliography{bib.bib}

\begin{thebibliography}{10}

\bibitem{AlurDill90}
R.~Alur and D.~Dill.
\newblock Automata for modelling real-time systems.
\newblock In {\em Proceedings of the International Colloquium on Automata,
  Languages and Programming (ICALP'90)}, volume 443 of {\em LNCS}, pages
  322--335. Springer-Verlag, 1990.

\bibitem{verifcar}
Johan Arcile, Raymond Devillers, and Hanna Klaudel.
\newblock Verifcar: a framework for modeling and model checking communicating
  autonomous vehicles.
\newblock {\em Autonomous Agents and Multi-Agent Systems}, 33(3):353--381, May
  2019.

\bibitem{Biere2003}
Armin Biere, Alessandro Cimatti, Edmund~M Clarke, Ofer Strichman, Yunshan Zhu,
  et~al.
\newblock Bounded model checking.
\newblock {\em Advances in computers}, 58(11):117--148, 2003.

\bibitem{Clarke2001}
Edmund Clarke, Armin Biere, Richard Raimi, and Yunshan Zhu.
\newblock Bounded model checking using satisfiability solving.
\newblock {\em Formal Methods in System Design}, 19(1):7--34, Jul 2001.

\bibitem{DBLP:series/eatcs/Jensen92}
Kurt Jensen.
\newblock {\em Coloured Petri Nets - Basic Concepts, Analysis Methods and
  Practical Use - Volume 1}.
\newblock EATCS Monographs on TCS. Springer, 1992.

\bibitem{kong15}
S.~Kong, S.~Gao, W.~Chen, and E.~Clarke.
\newblock dreach: $\delta$-reachability analysis for hybrid systems.
\newblock In Christel Baier and Cesare Tinelli, editors, {\em Tools and
  Algorithms for the Construction and Analysis of Systems}, pages 200--205,
  Berlin, Heidelberg, 2015. Springer Berlin Heidelberg.

\bibitem{larsen:97}
Kim~G. Larsen, Paul Pettersson, and Wang Yi.
\newblock Uppaal in a nutshell.
\newblock {\em International Journal on Software Tools for Technology Transfer
  (STTT)}, 1(1-2):134--152, Oct 1997.

\bibitem{OKelly201616AA}
M.~O'Kelly, H.~Abbas, and R.~Mangharam.
\newblock {APEX} : Autonomous vehicle plan verification and execution.
\newblock In {\em SAE World Congress}, 2016.

\bibitem{peterson}
James~L. Peterson.
\newblock {\em Petri Net Theory and the Modelling of Systems}.
\newblock Prentice Hall, 1981.

\bibitem{platzer09}
A.~Platzer and J.-D. Quesel.
\newblock European train control system: A case study in formal verification.
\newblock In Karin Breitman and Ana Cavalcanti, editors, {\em Formal Methods
  and Software Engineering}, pages 246--265, Berlin, Heidelberg, 2009. Springer
  Berlin Heidelberg.

\bibitem{pommereau:hal-01941485}
Franck Pommereau.
\newblock {ZINC: a compiler for ``any language''-coloured Petri nets}.
\newblock Technical report, {IBISC, university of Evry / Paris-Saclay}, 2018.

\bibitem{quottrup04}
M.~M. Quottrup, T.~Bak, and R.~I. Zamanabadi.
\newblock Multi-robot planning : a timed automata approach.
\newblock In {\em IEEE International Conference on Robotics and Automation,
  2004. Proceedings. ICRA '04. 2004}, volume~5, pages 4417--4422 Vol.5, April
  2004.

\bibitem{Sorea2003}
Maria Sorea.
\newblock Bounded model checking for timed automata.
\newblock {\em Electronic Notes in Theoretical Computer Science},
  68(5):116--134, 2003.

\bibitem{uppaal4.1}
Uppaal.
\newblock http://www.uppaal.org/.

\end{thebibliography}

\end{document}